\crefname{equation}{Eq.}{Eqs.}
\crefname{figure}{Fig.}{Figs.}
\crefname{theorem}{Theorem}{Theorems}
\crefname{corollary}{Corollary}{Corollaries}
\crefname{section}{Sec.}{Secs.}
\crefname{lemma}{Lemma}{Lemmas}
\newtheorem*{rep@theorem}{\rep@title}
\newcommand{\newreptheorem}[2]{%
\newenvironment{rep#1}[1]{%
 \def\rep@title{#2 \ref{##1}}%
 \begin{rep@theorem}}%
 {\end{rep@theorem}}}
\def\maketitle{
\@author@finish
\title@column\titleblock@produce
\suppressfloats[t]}
\newtheorem{theorem}{Theorem}
\newtheorem{lemma}{Lemma}
\newtheorem{corollary}{Corollary}
\newcommand{\ketW}[1][M]{\ket{\operatorname{W}_{#1}}}
\newcommand{\isection}[1]{%
\phantomsection\addcontentsline{toc}{section}{#1}\emph{#1---}%
}
\begin{document}

\phantomsection\addcontentsline{toc}{part}{Sufficient Wigner Negativity Implies Genuine Multipartite Entanglement}

\title{Sufficient Wigner Negativity Implies Genuine Multipartite Entanglement}

\author{Lin Htoo Zaw}
\affiliation{Centre for Quantum Technologies, National University of Singapore, 3 Science Drive 2, Singapore 117543}

\author{Jiajie Guo}
\affiliation{State Key Laboratory of Artificial Microstructure and Mesoscopic Physics, School of Physics, Frontiers Science Center for Nano-optoelectronics, $\&$ Collaborative Innovation Center of Quantum Matter, Peking University, Beijing 100871, China}

\author{Qiongyi He}
\email{qiongyihe@pku.edu.cn}
\affiliation{State Key Laboratory of Artificial Microstructure and Mesoscopic Physics, School of Physics, Frontiers Science Center for Nano-optoelectronics, $\&$ Collaborative Innovation Center of Quantum Matter, Peking University, Beijing 100871, China}
\affiliation{Collaborative Innovation Center of Extreme Optics, Shanxi University, Taiyuan, Shanxi 030006, China}
\affiliation{Hefei National Laboratory, Hefei 230088, China}

\author{Matteo Fadel}
\email{fadelm@phys.ethz.ch}
\affiliation{Department of Physics, ETH Z\"{u}rich, 8093 Z\"{u}rich, Switzerland}

\author{Shuheng Liu}
\email{liushuheng@pku.edu.cn}
\affiliation{State Key Laboratory of Artificial Microstructure and Mesoscopic Physics, School of Physics, Frontiers Science Center for Nano-optoelectronics, $\&$ Collaborative Innovation Center of Quantum Matter, Peking University, Beijing 100871, China}

\begin{abstract}
Wigner negativity and genuine multipartite entanglement (GME) are key nonclassical resources that enable computational advantages and broader quantum-information tasks.
In this work, we prove two theorems for multimode continuous-variable systems that relate these nonclassical resources.
Both theorems show that sufficient Wigner negativity---either a sufficiently-large Wigner negativity volume along a suitably-chosen two-dimensional slice, or a sufficiently-large nonclassicality depth of the center-of-mass mode of a system---certifies the presence of GME.
Moreover, violations of the latter inequality provide lower bounds of the trace distance to the set of non-GME states.
Our results also provide sufficient conditions for generating GME by interfering a state with the vacuum through a multiport interferometer, complementing long-known necessary conditions.
Beyond these fundamental connections, our methods have practical advantages for systems with native phase-space measurements: they require only measuring the Wigner function over a finite region, or measuring a finite number of characteristic function points.
Such measurements are frequently performed with readouts common in circuit and cavity quantum electrodynamic systems, trapped ions and atoms, and circuit quantum acoustodynamic systems.
As such, our GME criteria are readily implementable in these platforms.
\end{abstract}

\maketitle

\isection{Introduction}%
Quantum theory predicts many fundamental effects that contradict our expectations from classical physics.
In particular, it rules out the assignment of classical joint probabilities to incompatible observables \cite{GlauberCoherent1963,SudarshanEquivalence1963}, and excludes descriptions of entanglement in terms of classical correlations \cite{SchrodingerEntanglement1935}.
Apart from being a mere foundational curiosity, such nonclassical features appear as key resources in quantum communication, quantum computation, and quantum metrology \cite{entanglement-review}.

In continuous variable (CV) systems, nonclassicality is clearly identified by negative values in quasiprobability distributions like the Wigner function, which contradicts the nonnegativity expected of a classical probability distribution \cite{quasiprobability-review}.
The presence of Wigner negativity in states or operations is a necessary resource for violations of Bell inequalities \cite{BanaszekNonlocality1998,BanaszekTesting1999,BraunsteinQuantum2005,WalschaersNonGaussian2021}, and for achieving quantum advantage in computational tasks \cite{wigner-computing-necessary}.
Specifically, the volume of the negative regions of the Wigner function quantifies nonclassicality \cite{AnatoleNegativity2004}, while its logarithm is a computable monotone in resource theories of non-Gaussianity and Wigner negativity \cite{resource-theory-neg-1,resource-theory-neg-2}.

Meanwhile, entanglement is present in states that cannot be generated with only classical communication and local state preparation, which results in correlated behaviours between distant parties that exceed classical correlations \cite{entanglement-review}.
Entanglement is similarly a necessary resource for Bell violations \cite{brunner_bell_2014}, quantum communication \cite{CurtyPRL04}, and quantum computation \cite{VidalEfficient2003,VandenNestUniversal2013}, while also a sufficient resource for certain quantum information processing tasks \cite{MasanesAll2006,masanes2008useful}.

Some relationships between entanglement and certain forms of nonclassicality are already known \cite{necessary-interferometric-entanglement,KimEntanglement2002,KilloranConverting2016,VogelUnified2014}, while recent works have begun to bridge Wigner negativity and entanglement \cite{DetectingZhang2013,GholipourShahandeh2016,WalschaersRemote2020,LiuExperimental,xiang2022distribution,JayachandranDynamics2023,ZawCertifiable2024,liu2024quantumentanglementphasespace}, clarifying the connections between different quantum effects.
However, such relationships between Wigner negativity and entanglement are known only for the bipartite case.
With more partitions, the entanglement structure becomes more complex, with genuine multipartite entanglement (GME) being the strongest form of entanglement.
GME excludes mixtures of states that are separable across any bipartition \cite{first-GME-paper}, and has been shown to be useful in distributed computing and quantum communication protocols \cite{GME-useful-distributed-computing,GME-useful-secret-sharing}.
This motivates GME criteria that extend beyond the bipartite setting.

\begin{figure*}[t]
    \includegraphics{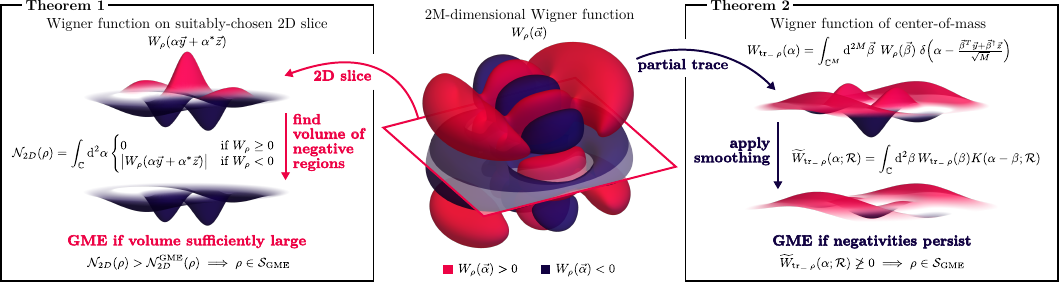}
    \caption{\label{fig:Overview}
        An illustration of our results that relate negativities in particular regions of the Wigner function to the presence of GME.
        The central figure is a three-dimensional slice of the Wigner function of an exemplary tripartite state specified in the End Matter.
        \cref{thm:GME-2D-slice} states that a sufficiently-large negativity volume along a suitably-chosen two-dimensional slice implies GME, while \cref{thm:GMN-reduced-state} states that the persistence of negativities in the center-of-mass Wigner function after a suitable smoothing process also implies GME.
    }
\end{figure*}

Such a need is particularly critical in hybrid qubit-CV architectures, which include circuit and cavity quantum-electrodynamics (cQED) \cite{cQED1,cQED2,cQED3,cQED4}, circuit quantum-acoustodynamics (cQAD) \cite{cQAD}, as well as trapped ions \cite{trappedIons1,trappedIons2} and atoms \cite{trappedAtoms}.
There, phase-space readouts, such as pointwise measurements of Wigner and characteristic functions, are not just commonplace, but in fact native to these setups \cite{cQED-parity-1,cQED-parity-2,cQED-CD-1,cQED-CD-2,cQED-CD-3,cQAD}.
In contrast, many existing CV GME criteria \cite{shalm2013three,ZhangGenuine2023,HyllusOptimal2006,TehPRA14,TehPRA22,Leskovjanova2025minimalcriteria,ToscanoSystematic2015} rely on quadrature measurements that are usually unavailable in such setups, thus necessitating state tomography or global transformations to recover quadrature data \cite{chou2025deterministic}.
Therefore, phase-space-based GME criteria tailored to these architectures, that can be implemented without quadrature measurements or full tomography, are still required in such systems.

In this work, we address this gap by showing that sufficient Wigner negativity, in two senses, implies GME in multimode CV systems.
The exact statements are illustrated in \cref{fig:Overview}, and will be specified in the following sections.
These findings are of notable foundational interest, as they demonstrate fundamental connections between two different nonclassical effects.

Next, we build upon these theorems to construct GME criteria that only require measuring either a finite two-dimensional region of the Wigner function, or a finite number of characteristic function points.
These have direct experimental implications, as they enable the detection of GME with only a few phase-space measurements.

\isection{Background and Definitions}%
An $M$-mode CV system is specified by annihilation operators $\vec{a} \coloneqq (a_1,a_2,\dots,a_M)$ that satisfy the canonical commutation relations $[a_m,a_{m'}] = 0$ and $[a_m,a_{m'}^\dagger] = \mathbbm{1} \delta_{m,m'}$.
Our central results will demonstrate that two different notions of nonclassicality in CV systems are fundamentally related.

The first notion of nonclassicality involves the Wigner function $W_{\rho}(\vec{\alpha}) = (2/\pi)^{M}{\tr}(\rho e^{i\pi\abs{\vec{a}-\vec{\alpha}}^2})$ of a state $\rho$, where $\vec{\alpha}=(\alpha_1, \dots, \alpha_M)$ is the vector of complex phase-space quadratures.
The Wigner function is a quasiprobability distribution in phase space in that it has all properties of a joint distribution of $\vec{\alpha}$---\emph{i.e.}, its marginal over the momentum is the position probability distribution of $\rho$, and vice versa over the position---except that it can take negative values \cite{quasiprobability-review}.
Hence, the presence of negativities in $W_\rho(\vec{\alpha})$ is one notion of nonclassicality, as it demonstrates that the observed behaviour cannot be simulated by a joint classical probability distribution of $\vec{\alpha}$.

The second notion of nonclassicality involves the concept of genuine multipartite entanglement (GME). A state $\rho$ is GME if it cannot be written as a convex combination of biseparable states, in the sense that
\begin{equation}
    \rho \;\text{ is GME } \implies \rho \neq \sum_{(\mathcal{A} \mid \bar{\mathcal{A}})} \sum_{k} p_{\mathcal{A}}^{(k)} \rho_{\mathcal{A}}^{(k)} \otimes \rho_{\bar{\mathcal{A}}}^{(k)} ,
\end{equation}
where $p_{\mathcal{A}}^{(k)} \geq 0$, $\sum_{(\mathcal{A} \mid \bar{\mathcal{A}})} \sum_{k} p_{\mathcal{A}}^{(k)}=1$ is a probability distribution,  $(\mathcal{A}|\bar{\mathcal{A}})$ runs over all bipartitions of the modes $\mathcal{A}=\{m_n\}_{n=1}^N$ and $\bar{\mathcal{A}}=\{m\}_{m=1}^M \setminus \mathcal{A}$ for $1\leq N<M$, and $\rho_{\{m_1, m_2, \ldots, m_N\}}^{(k)}$ are states defined locally on the $\{a_{m_1}, a_{m_2}, \ldots, a_{m_N}\}$ modes.
Such states cannot be prepared using only classical correlations and operations applied locally over bipartitions. Therefore, GME is another notion of nonclassicality, as it demonstrates the presence of correlations without a classical explanation.

{\it Primary Theoretical Results---}For our first theorem, we restrict ourselves to particular two-dimensional regions of phase space that faithfully capture correlations among the different modes.

\begin{theorem}[Sufficient Wigner negativity volume along a suitably-chosen two-dimensional slice implies GME]\label{thm:GME-2D-slice}
    Let $\circ$ be the element-wise product $[\mathbf{A}\circ\mathbf{B}]_{m,n} = [\mathbf{A}]_{m,n}[\mathbf{B}]_{m,n}$, and let $\vec{1} = (1,1,\dots,1)$ be a vector of ones.
    Choose some coefficients $\vec{y},\vec{z}\in\mathbb{C}^M$ such that $\vec{y}\circ\vec{y}^* - \vec{z}\circ\vec{z}^* = \vec{1}$, which specifies a two-dimensional slice $\{\alpha\vec{y} + \alpha^*\vec{z} : \alpha \in \mathbb{C}\}$ in phase space.
    Define the negativity volume of the Wigner function along this two-dimensional slice as
    \begin{equation}
        \mathcal{N}_{2D}(\rho)
        \coloneqq \pqty{\frac{\pi}{2}}^{M-1}\!\!
            \int_{\mathbb{C}}
            \dd[2]{\alpha}\left\{\begin{array}{lr}
                0 & \hspace{-6em}\text{if $W_{\rho}(\alpha\vec{y}+\alpha^*\vec{z})\geq0$,}\\[0.5ex]
                \abs\big{W_{\rho}(\alpha\vec{y}+\alpha^*\vec{z})} & \text{otherwise}.
            \end{array}\right.
    \end{equation}
    Then, $\mathcal{N}_{2D}(\rho) > \mathcal{N}_{2D}^{\operatorname{GME}}(\rho)$ implies that $\rho$ is GME, where
    \begin{equation}
        \mathcal{N}_{2D}^{\operatorname{GME}}(\rho)
        \coloneqq \frac{1}{4\sqrt{M-1}} -
        \frac{\pi^{M-1}}{2^M} \!\! \int_{\mathbb{C}}
        \dd[2]{\alpha} W_{\rho}(\alpha \vec{y}+\alpha^* \vec{z}).
    \end{equation}
\end{theorem}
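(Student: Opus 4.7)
The plan is to prove the contrapositive: assume $\rho$ is biseparable and show $\mathcal{N}_{2D}(\rho)\leq \mathcal{N}_{2D}^{\operatorname{GME}}(\rho)$. Using $\max(0,-W)=(|W|-W)/2$ to rewrite the negativity volume, the theorem is equivalent to the cleaner implication
\[
\int_{\mathbb{C}} |W_{\rho}(\alpha\vec{y}+\alpha^{*}\vec{z})|\, d^{2}\alpha > \frac{2^{M-2}}{\pi^{M-1}\sqrt{M-1}} \implies \rho\in\operatorname{GME}.
\]
Because $|y_m|^{2}-|z_m|^{2}=1$ mode by mode, a single-mode metaplectic $S_m$ on each mode whose phase-space action is $\alpha\mapsto y_m\alpha+z_m\alpha^{*}$ sends the slice to the diagonal $\{(\alpha,\ldots,\alpha)\}$ with unit Jacobian. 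Since $\bigotimes_m S_m$ is a product of mode-local unitaries, it preserves separability across every bipartition, and I may assume henceforth that the slice is the diagonal.

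For a product state $\rho=\rho_{\mathcal{A}}\otimes\rho_{\bar{\mathcal{A}}}$ with $|\mathcal{A}|=N$, the Wigner function factorizes on the diagonal as $W_{\rho_{\mathcal{A}}}(\alpha,\ldots,\alpha)\,W_{\rho_{\bar{\mathcal{A}}}}(\alpha,\ldots,\alpha)$. Applying a passive beamsplitter network on each block that rotates its center-of-mass into the first mode gives $W_{\rho_{\mathcal{A}}}(\alpha,\ldots,\alpha)=W_{\rho'_{\mathcal{A}}}(\sqrt{N}\alpha,0,\ldots,0)$ and similarly for $\bar{\mathcal{A}}$. Cauchy--Schwarz followed by the rescaling $\beta=\sqrt{N}\alpha$ reduces the task to bounding $\int|W_{\rho'_{\mathcal{A}}}(\beta,0,\ldots,0)|^{2}\,d^{2}\beta$. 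Using the parity form of the Wigner function and tracing out modes $2,\ldots,N$ yields $W_{\rho'_{\mathcal{A}}}(\beta,0,\ldots,0)=(2/\pi)^{N-1}W_{\sigma}(\beta)$ with $\sigma=\operatorname{tr}_{>1}(\rho'_{\mathcal{A}}\,\Pi_{>1})$ a single-mode Hermitian (generally non-positive) operator and $\Pi_{>1}=\bigotimes_{m>1}(-1)^{n_m}$. The overlap identity $\int W_{\sigma}^{2}\,d^{2}\beta=\pi^{-1}\operatorname{tr}(\sigma^{2})$ then gives $\int|W_{\rho'_{\mathcal{A}}}(\beta,0,\ldots,0)|^{2}\,d^{2}\beta=4^{N-1}\pi^{1-2N}\operatorname{tr}(\sigma^{2})$. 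Expanding $\sigma=\sum_{\vec{n}}(-1)^{|\vec{n}|}\rho^{(\vec{n})}$ with $\rho^{(\vec{n})}=\langle\vec{n}|_{>1}\rho'_{\mathcal{A}}|\vec{n}\rangle_{>1}\geq 0$ of trace $p_{\vec{n}}$, the estimate $\operatorname{tr}(\rho^{(\vec{n})}\rho^{(\vec{n}')})\leq p_{\vec{n}}p_{\vec{n}'}$ forces $\operatorname{tr}(\sigma^{2})\leq(\sum_{\vec{n}} p_{\vec{n}})^{2}=1$. Assembling the pieces yields the per-bipartition bound $\int|W_{\rho}|\,d^{2}\alpha\leq 2^{M-2}/(\pi^{M-1}\sqrt{N(M-N)})$, which is maximized at $N\in\{1,M-1\}$ and there equals the threshold exactly.

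General mixed biseparable states $\rho=\sum_{k}p_{k}\,\rho_{\mathcal{A}_k}\otimes\rho_{\bar{\mathcal{A}}_k}$ are handled by the triangle inequality $\int|W_{\rho}|\leq\sum_{k}p_{k}\int|W_{\rho_k}|$, since each summand obeys the uniform bound above. The main obstacle I anticipate is the core estimate $\operatorname{tr}(\sigma^{2})\leq 1$: because $\sigma$ is not a physical state (the parity signs spoil positivity), it cannot be controlled by any purity bound, and the sharpness of this constant is exactly what makes the threshold $1/(4\sqrt{M-1})$ tight. The $\sqrt{M-1}$ prefactor then arises naturally from minimizing $\sqrt{N(M-N)}$ over bipartitions $N\in\{1,\ldots,M-1\}$, so the extremal bipartitions that the bound is optimized for are precisely the $1$-versus-$(M-1)$ splits.
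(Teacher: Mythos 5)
Your proposal is correct and follows essentially the same route as the paper: reduce to the diagonal slice by local symplectic unitaries, rotate each block's centre of mass into a single mode, apply Cauchy--Schwarz, and control $\int\abs{W}^2$ via the operator $\sigma=\tr_{>1}(\rho'\,\Pi_{>1})$, then optimize over bipartitions to get the $1/\sqrt{N(M-N)}$ factor. The only (minor) divergence is in bounding $\tr(\sigma^2)\leq 1$ — you expand $\sigma$ in the Fock basis of the traced-out modes and use $\tr(AB)\leq\tr(A)\tr(B)$ for positive operators, while the paper uses $\norm{\sigma}_2\leq\norm{\sigma}_1$ together with the variational characterization of the trace norm and $-\mathbbm{1}\preceq\Pi_{>1}\preceq\mathbbm{1}$; both are valid and yield the same constant.
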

The proof is given in Sec.~S1 of the Supplemental Material \cite{supplement}, \nocite{Bochner-theorem-1,Bochner-theorem-2}
where it is also shown that
$0 \leq \mathcal{N}_{2D}^{\operatorname{GME}}(\rho) \leq \max_{\rho,\vec{y},\vec{z}}\mathcal{N}_{2D}^{\operatorname{GME}}(\rho) = (4\sqrt{M-1})^{-1} + (2M)^{-1}$, so the GME bound is nonnegative and finite for all states.
Therefore, a sufficiently-large Wigner negativity volume---\emph{i.e.}, the volume occupied by the negative regions of $W_\rho(\vec{\alpha})$---along the two-dimensional slice $\{\alpha\vec{y}+\alpha^*\vec{z}:\alpha\in\mathbb{C}\}$ of phase space implies GME.
In the case that $\mathcal{N}_{2D}^{\operatorname{GME}}(\rho)$ is not known or difficult to compute, the uniform bound $(4\sqrt{M-1})^{-1} + (2M)^{-1}$ can be used instead.
Furthermore, \cref{thm:GME-2D-slice} reduces to a bipartite entanglement witness from Ref.~\cite{liu2024quantumentanglementphasespace} when $M=2$.

Our next theorem concerns the center-of-mass mode
\begin{equation}
    a_+ \coloneqq \frac{1}{\sqrt{M}}\sum_{m=1}^{M}\pqty{y_m a_m + z_m a_m^\dag},
\end{equation}
where $\vec{y},\vec{z}\in\mathbb{C}^M:\vec{y}\circ\vec{y}^*-\vec{z}\circ\vec{z}^* = \vec{1}$ as before.
Given a state $\rho$ describing the full $M$-mode system, the reduced state $\tr_{-}\rho$ describing its center-of-mass is the partial trace over the relative modes $\{a_{-m}\}_{m=2}^M$ such that
\begin{equation}
    \tr_{-}\rho \coloneqq \tr_{a_{-2}}\tr_{a_{-3}}\cdots\tr_{a_{-M}}\rho,
\end{equation}
where the collection of modes $\{a_{+}\}\cup\{a_{-m}\}_{m=2}^M$ satisfy the canonical commutation relations.

Here, $a_{+}$ is related via local transformations to $\propto \sum_{m=1}^M a_m$, the eponymous mode that describes the center-of-mass of $M$ identically-coupled trapped ions \cite{trapped-ion-center-of-mass}.
Hence, $\tr_{-}\rho$ is a description of the system that ignores every degree of freedom of the system except for its center-of-mass motion.
The Wigner function of $\tr_-\rho$ can also be computed from the full Wigner function by marginalizing over the relative degrees of freedom as
\begin{equation}
    W_{\tr_-\rho}(\alpha) = \int_{\mathbb{C}^{M}}\dd[2M]{\vec{\beta}}
        \;W_{\rho}(\vec{\beta})
        \;\delta\pqty{
            \alpha -
            \tfrac{
                \vec{\beta}^T\vec{y} +
                \vec{\beta}^{\dagger}\vec{z}
            }{\sqrt{M}}
        }.
\end{equation}
With this in mind, we can state the following theorem.

\begin{theorem}[Negativity of the smoothed Wigner function of the center-of-mass implies GME]\label{thm:GMN-reduced-state}
    Choose $M-2$ states $\mathcal{R} = \{\varrho_m\}_{m=1}^{M-2}$. Define the smoothed Wigner function of the center-of-mass of the system as
\begin{equation}
\widetilde{W}_{\tr_{-}\rho}(\alpha;\mathcal{R})\\
    \coloneqq \int_{\mathbb{C}}\dd[2]{\beta}
    \;W_{\tr_{-}\rho}(\beta)
    \;K\pqty{\alpha-\beta;\mathcal{R}},
\end{equation}
where $K(\alpha,\mathcal{R})$ is the convolution kernel
\begin{equation}
\begin{aligned}
    K(\alpha;\mathcal{R})
    &\coloneqq
    \int_{\mathbb{C}^{M-2}}\dd[2(M-2)]{\vec{\gamma}}
    \;\prod_{m=1}^{M-2}W_{\varrho_m}(\gamma_m)\\[-1ex]
    &\hspace{4.5em}{}\times{}2\pqty{1-M^{-1}}
    \;\delta\pqty{
        \alpha-\tfrac{\vec{\gamma}^T\vec{1}}{\sqrt{M}}
    }.
\end{aligned}
\end{equation}
Then, the smoothed Wigner function lower bounds, up to a factor, the trace distance to all non-GME states as
\begin{equation}
\begin{aligned}
    \max\Bqty{0,-\widetilde{W}_{\tr_{-}\rho}(\alpha;\mathcal{R})} \leq \frac{2}{\pi}\min_{\sigma \notin \operatorname{GME}} \| \sigma - \rho \|_1.
\end{aligned}
\end{equation}
Hence, $\exists\alpha : \widetilde{W}_{\tr_{-}\rho}(\alpha;\mathcal{R}) < 0$ implies that $\rho$ is GME.
\end{theorem}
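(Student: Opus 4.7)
The plan is to prove the statement by (i) recasting $\widetilde{W}_{\tr_{-}\rho}(\alpha;\mathcal{R})$ as a rescaled Wigner function of a single-mode state, (ii) contracting $|\widetilde{W}_\rho-\widetilde{W}_\sigma|$ by trace distance via data processing, and (iii) showing $\widetilde{W}_\sigma(\alpha)\geq 0$ for every non-GME $\sigma$ --- minimising the resulting bound over such $\sigma$ will then yield the theorem.

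For step (i), I would introduce auxiliary modes $b_1,\dots,b_{M-2}$ in states $\varrho_1,\dots,\varrho_{M-2}$ and consider the output mode $c=c_0\,a_+ + \sum_m c_m\,b_m$ of an $(M-1)$-mode symplectic transformation with $c_0=1/\sqrt{2(1-1/M)}$ and $c_m=1/\sqrt{2(M-1)}$. One readily checks $|c_0|^2+(M-2)|c_m|^2=1$, making $c$ a genuine bosonic output mode. Tracking the Jacobian of the two-dimensional complex delta in $K$ --- noting that $c_m/c_0=1/\sqrt{M}$ so $c_0\beta+\sum_m c_m\gamma_m=c_0(\beta+\sum_m \gamma_m/\sqrt{M})$, and that $\delta^{(2)}(c_0 X)=c_0^{-2}\delta^{(2)}(X)$ --- then yields $\widetilde{W}_{\tr_{-}\rho}(\alpha;\mathcal{R})=W_{\rho_c}(\alpha/R)$ with $R=\sqrt{2(1-1/M)}$, where $\rho_c$ is the single-mode reduced state of $c$ after applying the interferometer to $\rho\otimes\varrho_1\otimes\cdots\otimes\varrho_{M-2}$ and tracing out the other $M-2$ output modes. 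Step (ii) is then immediate: since $\rho\mapsto\rho_c$ is CPTP it contracts the trace norm, $\|\rho_c-\sigma_c\|_1\leq\|\rho-\sigma\|_1$; combined with the parity-operator bound $|W_{\rho_c}(\beta)-W_{\sigma_c}(\beta)|\leq(2/\pi)\|\rho_c-\sigma_c\|_1$ (using $W_\tau(\beta)=(2/\pi)\tr(\tau\Pi_\beta)$ and $\|\Pi_\beta\|_\infty=1$), this gives $|\widetilde{W}_\rho(\alpha)-\widetilde{W}_\sigma(\alpha)|\leq(2/\pi)\|\rho-\sigma\|_1$ uniformly in $\alpha$.

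The crux is step (iii). By convexity it suffices to treat a product $\sigma=\sigma_\mathcal{A}\otimes\sigma_{\bar{\mathcal{A}}}$ across some bipartition of sizes $N$ and $M-N$. I would exploit the decomposition $a_+=\sqrt{N/M}\,a_+^\mathcal{A}+\sqrt{(M-N)/M}\,a_+^{\bar{\mathcal{A}}}$ and split the $M-2$ auxiliaries as $N-1$ on side $\mathcal{A}$ and $M-N-1$ on side $\bar{\mathcal{A}}$, so each side is ``short by one auxiliary''. The product structure of $\sigma$ then factorises the characteristic function of $\sigma_c$ into a product of $M$ single-mode characteristic functions at rescaled arguments: the centre-of-mass characteristic functions of $\sigma_\mathcal{A}$ and $\sigma_{\bar{\mathcal{A}}}$ at scales $\sqrt{N/(2M-2)}$ and $\sqrt{(M-N)/(2M-2)}$, together with the $M-2$ auxiliary characteristic functions each at scale $1/\sqrt{2M-2}$. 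The key lemma I expect to need is that precisely this product is classically positive-definite on $\mathbb{C}$ in the Bochner sense, so that its Fourier inverse $W_{\sigma_c}$ is nonnegative; this is where the count ``$M-2$'' and the normalisation $2(1-1/M)$ are crucial, since the biseparability-induced factorisation of $\chi_{\tr_{-}\sigma}$ combined with the collective smoothing from the $M-2$ auxiliaries is calibrated to just barely push the Fourier inverse into the classical regime, with fewer auxiliaries leaving insufficient smoothing. I anticipate this lemma to be the principal analytical obstacle; once it is in hand, combining (i)--(iii) gives $-\widetilde{W}_\rho(\alpha)\leq\widetilde{W}_\sigma(\alpha)-\widetilde{W}_\rho(\alpha)\leq(2/\pi)\|\rho-\sigma\|_1$ for every $\sigma\notin\operatorname{GME}$, and minimising over such $\sigma$ together with the trivial lower bound $0$ yields the theorem; the implication ``$\exists\alpha:\widetilde{W}_{\tr_{-}\rho}(\alpha;\mathcal{R})<0\Rightarrow\rho\in\operatorname{GME}$'' follows by contrapositive.
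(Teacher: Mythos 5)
Steps (i) and (ii) of your plan are sound and essentially coincide with the paper's argument: the kernel identity $\widetilde{W}_{\tr_{-}\rho}(\alpha;\mathcal{R}) = W_{\rho_c}\big(\sqrt{M/(2(M-1))}\,\alpha\big)$, with $\rho_c$ the centre-of-mass reduction of $\rho\otimes\varrho_1\otimes\cdots\otimes\varrho_{M-2}$, and the contraction $|\widetilde{W}_\rho-\widetilde{W}_\sigma|\leq (2/\pi)\|\rho-\sigma\|_1$ are exactly how the paper gets the quantitative bound (there phrased via a Hermitian $V$ with $-\mathbbm{1}\preceq V\preceq\mathbbm{1}$ and the variational definition of the trace norm, equivalent to your data-processing plus displaced-parity bound). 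The genuine gap is step (iii), which you yourself flag as an unproven ``key lemma'': that for every biseparable $\sigma$ the product of the rescaled collective characteristic functions of $\sigma_{\mathcal{A}}$, $\sigma_{\bar{\mathcal{A}}}$ and the $M-2$ auxiliaries is positive definite. That lemma \emph{is} the nontrivial content of the theorem, and no argument for it is given. Moreover, the Bochner/Gaussian-counting route you sketch does not close by itself: writing each Wigner characteristic function as a (positive-definite) Husimi characteristic function times $e^{s^2|\xi|^2/2}$, your scales give $N/(2M-2)+(M-N)/(2M-2)+(M-2)/(2M-2)=1$, so the product is a positive-definite function multiplied by $e^{+|\xi|^2/2}$ --- exactly one vacuum unit short of manifest positivity. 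The calibration is tight, and positivity must come from the separability structure itself, not from smoothing bookkeeping.

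What the paper does instead is reduce to the already-established bipartite result (its Lemma 2, i.e. Theorem 2 of Ref.~\cite{ZawCertifiable2024}): for each bipartition $(\mathcal{A}\mid\bar{\mathcal{A}})$ it assigns the $M-2$ product auxiliaries so that \emph{both} sides contain exactly $M-1$ modes ($M-1-|\mathcal{A}|$ auxiliaries with $\mathcal{A}$ and $|\mathcal{A}|-1$ with $\bar{\mathcal{A}}$ --- note this is the opposite of your ``$N-1$ auxiliaries on the $N$-mode side'' split, and the assignment may depend on the bipartition precisely because the auxiliaries are in a product state). The global centre-of-mass is then the balanced combination $(b_1+\bar{b}_1)/\sqrt{2}$ of the two sides' collective modes, so the bipartite lemma applies verbatim and gives $W_{\tr_-(\sigma\otimes\bigotimes_m\varrho_m)}(\alpha)\geq 0$ for each bipartition term; convexity over bipartitions finishes the positivity statement, and your steps (i)--(ii) then yield the trace-distance bound. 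Unless you either prove your positive-definiteness lemma directly or supply this reduction to the bipartite theorem, the proposal does not establish the result.
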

Here, $\|\bullet\|_1$ is the trace norm, and the proof is laid out in full in Sec.~S2 of the Supplemental Material \cite{supplement}.
Therefore, if the negativities in the center-of-mass Wigner function persist even after smoothing it with an appropriate filter function, there must be GME.
As before, \cref{thm:GMN-reduced-state} also reduces to another bipartite witness studied in Refs.~\cite{GholipourShahandeh2016,JayachandranDynamics2023,ZawCertifiable2024} for $M=2$.

A simple choice for the filter is to take $\mathcal{R}=\mathcal{R}_G$ to be Gaussian states.
Then, up to translations $\alpha \to \alpha+\alpha'$,
\begin{equation}\label{eq:filter-function-Gaussian}
K(\alpha;\mathcal{R}_G) = \tfrac{1-M^{-1}}{\pi\sqrt{\det\Sigma'}} e^{-\frac{1}{2}\abs{{\Sigma'}^{-\frac{1}{2}}{\scriptscriptstyle\spmqty{\Re[\alpha]\\\Im[\alpha]}}}^2} : \sqrt{\det\Sigma'} \geq \tfrac{M-2}{4M}.
\end{equation}
This also allows us to recast the theorem by relating GME to another preexisting notion of nonclassicality in the literature due to \citet{nonclassicality-depth}.
By substituting \cref{eq:filter-function-Gaussian} into \cref{thm:GMN-reduced-state}, with detailed steps in Sec.~S3 of the Supplemental Material \cite{supplement}, we obtain the following:

\begin{corollary}[Sufficient nonclassicality depth of the center-of-mass implies GME]\label{col:nonclassicality-depth-reduced-state}
    The nonclassicality depth $\tau_c$ of a state $\rho$ is defined as \cite{nonclassicality-depth}
    \begin{equation}
        \tau_c(\rho) \coloneqq \min\!\Bqty{\tau : \forall \alpha : \frac{1}{\pi\tau}\int_{\mathbb{C}}\dd[2]{\beta}
        \;P_\rho(\beta)
        \;e^{-\frac{\abs{\alpha-\beta}^2}{\tau}} \geq 0 },
    \end{equation}
    where $P_\rho(\alpha)$ is the Glauber $P$ function of $\rho$ such that $\rho = \int_{\mathbb{C}}\dd[2]{\alpha} P_{\rho}(\alpha)\ketbra{\alpha}$, and $\ket{\alpha}$ is the coherent state.
    Then, $\tau_c(\tr_{-}\rho) > 1 - M^{-1}$
    implies that $\rho$ is GME.
\end{corollary}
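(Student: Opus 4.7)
The plan is to specialize \cref{thm:GMN-reduced-state} to Gaussian filters $\mathcal{R}_G$ via \cref{eq:filter-function-Gaussian}, and then to recognize the resulting smoothed Wigner function as a Gaussian smoothing of the Glauber $P$-function in precisely the form appearing in the definition of the nonclassicality depth $\tau_c$. To make the GME criterion as sensitive as possible, I would choose an isotropic covariance $\Sigma = s\,\mathbb{I}_{2\times 2}$ saturating the lower bound $s = (M-2)/(4M)$ allowed by \cref{eq:filter-function-Gaussian}.

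With this isotropic choice the kernel becomes $K(\alpha;\mathcal{R}_G) = \tfrac{1-M^{-1}}{\pi s}\,e^{-\abs{\alpha}^2/(2s)}$. Next I would invoke the standard identity $W_{\tr_{-}\rho}(\alpha) = (2/\pi)\int_{\mathbb{C}}\dd[2]{\beta}\,P_{\tr_{-}\rho}(\beta)\,e^{-2\abs{\alpha-\beta}^2}$, which expresses the Wigner function of the reduced state as its $P$-function smoothed by a vacuum Gaussian. Combining this relation with the filter kernel via the two-dimensional Gaussian-convolution semigroup yields
\begin{equation*}
\widetilde{W}_{\tr_{-}\rho}(\alpha;\mathcal{R}_G) = \frac{2(1-M^{-1})}{\pi\tau}\int_{\mathbb{C}}\dd[2]{\beta}\,P_{\tr_{-}\rho}(\beta)\,e^{-\abs{\alpha-\beta}^2/\tau},
\end{equation*}
with effective bandwidth $\tau = 2s + 1/2$. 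Substituting the minimum allowed variance $s=(M-2)/(4M)$ gives exactly $\tau_{\min} = 1 - M^{-1}$, matching the kernel in the definition of $\tau_c$.

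The corollary then follows immediately. If $\tau_c(\tr_{-}\rho) > 1 - M^{-1}$, then by the definition of nonclassicality depth, the smoothing of $P_{\tr_{-}\rho}$ at $\tau = \tau_{\min}$ is not everywhere nonnegative, so the integral above is strictly negative at some $\alpha$. Since $2(1-M^{-1})/(\pi\tau_{\min}) > 0$ for all $M \geq 2$, the smoothed Wigner function itself is negative at that $\alpha$, and \cref{thm:GMN-reduced-state} then implies that $\rho$ is GME.

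The only real obstacle is careful bookkeeping: tracking the $2\pi$ and $\sigma^2$ conventions between the filter, the vacuum-Wigner kernel, and the nonclassicality-depth kernel; correctly composing their convolutions; and verifying that the constraint $\sqrt{\det\Sigma} \geq (M-2)/(4M)$ is saturated by an isotropic $\Sigma$ that maps to exactly $\tau = 1 - M^{-1}$. Beyond this, no new analytic ingredient beyond \cref{thm:GMN-reduced-state} and the Gaussian semigroup is required.
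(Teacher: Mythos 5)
Your proposal is correct and matches the paper's own proof in essence: both specialize \cref{thm:GMN-reduced-state} to the Gaussian kernel of \cref{eq:filter-function-Gaussian} with the saturating choice $\sqrt{\det\Sigma}=(M-2)/(4M)$, use the identity expressing the Wigner function as the vacuum-Gaussian smoothing of the $P$-function, and combine the two Gaussians (the paper splits $e^{-\abs{\cdot}^2/(1-M^{-1})}$ into the pair, you compose them) to identify the $\tau=1-M^{-1}$ smoothing of $P_{\tr_-\rho}$ with $\widetilde{W}_{\tr_-\rho}$ up to the positive factor $2(1-M^{-1})$. The bookkeeping you flag indeed works out exactly as you state, so no further ingredient is needed.
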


Curiously, the converse of \cref{col:nonclassicality-depth-reduced-state} also provides a sufficient condition for generating GME by interfering a state with the vacuum via a maximally-mixing interferometer $U$.
Here, $U$ transforms $\vec{a}$ as $U^\dag \vec{a} U = \mathbf{U}\vec{a}$, where $\mathbf{U}$ is an $M \times M$ unitary matrix such that $\forall m : \abs{[\mathbf{U}]_{1,m}}^2=M^{-1}$, and can be constructed out of two-mode beamsplitters \cite{ReckExperimental1994}.
Then, interfering an adequately nonclassical state with the vacuum via $U$ is \emph{sufficient} for GME:
\begin{equation}
    \tau_c(\rho_1) > 1-M^{-1} \!\implies\! U\pqty{\rho_1 \otimes \ketbra{0}^{\otimes(M-1)}}U^\dag \in \operatorname{GME}.
\end{equation}
This complements the long-known result that a nonzero nonclassicality depth is \emph{necessary} for generating entanglement via interference with the vacuum, \emph{i.e.}, $\tau_c(\rho_1) = 0 \implies U(\rho_1 \otimes \ketbra{0}^{\otimes(M-1)})U^\dag \notin \operatorname{GME}$ \cite{necessary-interferometric-entanglement}, and also provides a more readily computable condition for arbitrary mixed states than the condition given in Ref.~\cite{VogelUnified2014}.

\isection{Construction of GME criteria}%
Direct pointwise measurements of the Wigner function are routinely implemented in qubit-CV systems described by the Jaynes--Cummings interaction.
Building upon \cref{thm:GME-2D-slice}, we can construct a GME criterion which relies only on such Wigner function measurements performed over a finite region of phase space.

\begin{figure}
    \centering
    \includegraphics{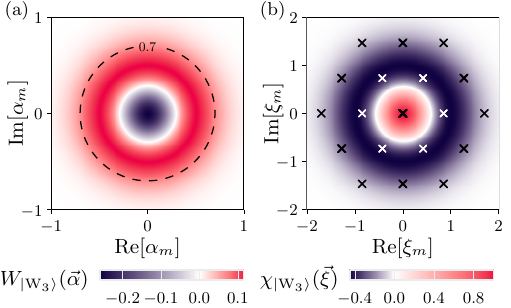}
    \caption{
        \label{fig:W-state-combined}(a) The Wigner function $W_{\ketW[3]}(\vec{\alpha})$ of the tripartite W state along the slice $\alpha_1=\alpha_2=\alpha_3$.
            The GME of this state can be certified using \cref{col:GME-Wigner-witness} by just integrating $W_{\ketW[3]}(\vec{\alpha})$ along this two-dimensional slice over the region $0 \leq \abs{\alpha_m} \lesssim r$ for any $r \gtrsim 0.7$.
        (b) The characteristic function $\chi_{\ketW[3]}(\vec{\xi})$ of $\ketW[3]$ along $\xi_1=\xi_2=\xi_3$.
        Its GME is certified using \cref{col:GME-characteristic-witness} by just measuring $\chi_{\ketW[3]}(\vec{\xi})$ at 10 of the 19 points marked out as crosses, with the other 9 values obtained from the symmetry $\chi_{\ketW[3]}(-\vec{\xi}) = \chi_{\ketW[3]}^*(\vec{\xi})$.
        Details of the plotted state are given in the End Matter.
    }
\end{figure}

\begin{corollary}[GME criterion with Wigner function measurements over a finite region]\label{col:GME-Wigner-witness}
    Let the absolute volume of the Wigner function on $\{\alpha\vec{y} + \alpha^*\vec{z} : \alpha \in \omega\}$, where $\vec{y}\circ\vec{y}^*-\vec{z}\circ\vec{z}^* = \vec{1}$ and $\omega \subseteq \mathbb{C}$ is Lebesgue-measurable, be
    \begin{equation}\label{eq:GMEWignerFiniteRegionDef}
    \begin{aligned}
        \mathcal{V}_{2D}(\rho;\omega) &\coloneqq \pqty{\frac{\pi}{2}}^{M-1}\int_{\omega}\dd[2]{\alpha} \abs\big{W_{\rho}\pqty{\alpha\vec{y} + \alpha^*\vec{z}}}.
    \end{aligned}
    \end{equation}
    Then, $\mathcal{V}_{2D}(\rho;\omega) > (2\sqrt{M-1})^{-1}$ implies that $\rho$ is GME.
\end{corollary}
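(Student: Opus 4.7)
The plan is to show that this corollary is a direct algebraic rewriting of \cref{thm:GME-2D-slice} together with the monotonicity of $\mathcal{V}_{2D}(\rho;\omega)$ in $\omega$.

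First I would use the pointwise identity $|W| = W + 2\max\{0,-W\}$ inside the definition of $\mathcal{N}_{2D}(\rho)$ to write
\begin{equation}
    \mathcal{N}_{2D}(\rho) = \tfrac{1}{2}\mathcal{V}_{2D}(\rho;\mathbb{C}) - \tfrac{1}{2}\pqty{\tfrac{\pi}{2}}^{M-1}\!\!\int_{\mathbb{C}}\dd[2]{\alpha}\, W_{\rho}(\alpha\vec{y}+\alpha^*\vec{z}),
\end{equation}
using the definition of $\mathcal{V}_{2D}(\rho;\omega)$ from \cref{eq:GMEWignerFiniteRegionDef} with $\omega = \mathbb{C}$. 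Noting that $\pi^{M-1}/2^{M} = \tfrac{1}{2}(\pi/2)^{M-1}$, the $\int W_{\rho}$ contributions on the two sides of the criterion $\mathcal{N}_{2D}(\rho) > \mathcal{N}_{2D}^{\operatorname{GME}}(\rho)$ from \cref{thm:GME-2D-slice} cancel exactly, leaving the equivalent state-independent threshold
\begin{equation}
    \mathcal{V}_{2D}(\rho;\mathbb{C}) > \tfrac{1}{2\sqrt{M-1}}.
\end{equation}

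Next, because $|W_{\rho}(\alpha\vec{y}+\alpha^*\vec{z})| \geq 0$ pointwise, the functional $\omega \mapsto \mathcal{V}_{2D}(\rho;\omega)$ is monotone under set inclusion, so for any $\omega \subseteq \mathbb{C}$ we have $\mathcal{V}_{2D}(\rho;\omega) \leq \mathcal{V}_{2D}(\rho;\mathbb{C})$. Consequently, $\mathcal{V}_{2D}(\rho;\omega) > (2\sqrt{M-1})^{-1}$ upgrades to $\mathcal{V}_{2D}(\rho;\mathbb{C}) > (2\sqrt{M-1})^{-1}$, which by the equivalence just established is precisely the hypothesis of \cref{thm:GME-2D-slice}, and hence certifies that $\rho$ is GME.

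There is no real obstacle: the entire content is a one-line rearrangement using $|W|=W+2\max\{0,-W\}$, together with the trivial monotonicity in $\omega$. The conceptual point worth flagging is that the $\rho$-dependence of $\mathcal{N}_{2D}^{\operatorname{GME}}(\rho)$ was engineered in \cref{thm:GME-2D-slice} to cancel exactly the $\int W_\rho$ piece hidden inside $\mathcal{N}_{2D}(\rho)$, which is what makes the threshold on the absolute volume state-independent and in turn permits the restriction of the integration to a finite region $\omega$ without weakening the GME witness.
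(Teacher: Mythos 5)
Your proof is correct and follows essentially the same route as the paper's: the paper likewise combines the monotonicity of $\int_{\omega}\abs{W_\rho}$ under $\omega\subseteq\mathbb{C}$ with the observation that the $\int W_\rho$ terms in $\mathcal{N}_{2D}(\rho)$ and $\mathcal{N}_{2D}^{\operatorname{GME}}(\rho)$ cancel, reducing \cref{thm:GME-2D-slice} to the state-independent threshold on the absolute volume. Your only cosmetic difference is performing the cancellation first (via $\abs{W}=W+2\max\{0,-W\}$) and the monotonicity second, whereas the paper does the reverse; the content is identical.
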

The proof is given in Sec.~S4 of the Supplemental Material \cite{supplement}.
As an example, consider the tripartite W state $\ketW[3] \propto \ket{100} + \ket{010} + \ket{001}$, whose Wigner function along the two-dimensional slice $\alpha\vec{1} = (\alpha,\alpha,\alpha)$ is shown in \cref{fig:W-state-combined}(a).
The absolute volume of the plotted slice over $\omega_r \coloneqq \{\alpha : \abs{\alpha} \leq r\}$ given a radius $r$ is
\begin{equation}
    \mathcal{V}_{2D}(\ketW[3];\omega_{0.7}) \gtrsim \pqty{2\sqrt{2}}^{-1},
\end{equation}
where the right-hand side is the GME bound from \cref{col:GME-Wigner-witness} with $M=3$.
Therefore, integrating over this finite region with any $r > 0.7$ certifies the GME of $\ketW[3]$.

In practice, \cref{eq:GMEWignerFiniteRegionDef} would be computed with numerical integration using only measurements of the Wigner function at a finite number of phase space points in $\omega$.
Some practical issues and error analysis of this pragmatic approach are discussed in our companion paper \cite{CV-GME-paper}.
We found that to certify the GME of the W state with discrete points of phase space, a rigorous approach demands Wigner function measurements at $\approx 76\,000$ points, while a heuristic approach commonplace in the numerical integration literature requires $\approx 78$ points.

Complementing the above criterion, consider pointwise measurements of the characteristic function $\chi_{\rho}(\vec{\xi})$, which is the Fourier transform of the Wigner function as
\begin{equation}
    \chi_{\rho}(\vec{\xi}) \coloneqq \int\dd[2M]{\vec{\alpha}}
    \;W_\rho(\vec{\alpha})
    \;e^{\sum_{m=1}^M(\alpha_m^*\xi_m-\alpha_m\xi_m^*)}.
\end{equation}
Characteristic function measurements are routinely performed in qubit-CV systems, especially when the qubit and CV modes are weakly and dispersively coupled.

Using analogous techniques to the bipartite case \cite{ZawCertifiable2024} detailed in Sec.~S5 of the Supplemental Material \cite{supplement}, we can construct a GME criterion which relies only on a finite number of characteristic function measurements.
\begin{corollary}[GME criterion with characteristic function measurements at finitely many points]\label{col:GME-characteristic-witness}
    Choose $N$ phase-space points $\Xi = \{\xi_n\}_{n=1}^N$ and coefficients $\vec{y},\vec{z}\in\mathbb{C}^M : \vec{y}\circ\vec{y}^* - \vec{z}\circ\vec{z}^* = \vec{1}$.
    Construct the matrix $\mathbf{C}(\rho;\Xi) \in \mathbb{C}^{N \times N}$ as
    \begin{equation}
    \begin{aligned}
        [\mathbf{C}(\rho;\Xi)]_{n,n'}
        &\coloneqq \frac{1}{N}\chi_{\rho}\pqty\big{
            (\xi_n-\xi_{n'})\vec{y} +
            (\xi_n^*-\xi_{n'}^*)\vec{z}
        } \\
        &= \frac{1}{N}\chi_{\tr_{-}\rho}\pqty{
            \sqrt{M}(\xi_n-\xi_{n'})
        }.
    \end{aligned}
    \end{equation}
    Next, choose $M-2$ states $\mathcal{R} = \{\varrho_m\}_{m=1}^{M-2}$ and construct $\mathbf{K}(\mathcal{R};\Xi)\in\mathbb{C}^{N\times N}$ by computing
    $[\mathbf{K}(\mathcal{R};\Xi)]_{n,n'} \coloneqq \prod_{m=1}^{M-2} \chi_{\varrho_m}(\xi_n-\xi_{n'})$.
    Then, the magnitude of the most negative eigenvalue of their element-wise product
    \begin{equation}
        \mathcal{N}_C(\rho;\Xi,\mathcal{R}) \coloneqq \max\Bqty{0,-\operatorname{mineig}\bqty{
        \mathbf{C}(\rho;\Xi)\circ\mathbf{K}(\mathcal{R};\Xi)
        }}
    \end{equation}
    lower bounds the trace distance to all non-GME states as
    \begin{equation}
        \mathcal{N}_C(\rho;\Xi,\mathcal{R}) \leq \min_{\sigma \notin\mathrm{GME}} \| \sigma - \rho \|_1.
    \end{equation}
    Hence, $\mathbf{C}(\rho;\Xi)\circ\mathbf{K}(\mathcal{R};\Xi) \not\succeq 0$ implies that $\rho$ is GME.
\end{corollary}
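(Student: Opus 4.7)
The plan is to identify the matrix $\mathbf{C}(\rho;\Xi)\circ\mathbf{K}(\mathcal{R};\Xi)$, up to a positive constant, as the Bochner-type matrix built from the characteristic function of the smoothed Wigner function $\widetilde{W}_{\tr_{-}\rho}(\alpha;\mathcal{R})$ of \cref{thm:GMN-reduced-state}, and then chain Bochner-type positivity with that theorem to extract the trace-distance bound. For the identification, I would use the convolution theorem: because $\widetilde{W}_{\tr_{-}\rho}$ is $W_{\tr_{-}\rho}$ convolved against $K(\,\cdot\,;\mathcal{R})$, its characteristic function factorises as $\chi_{\tr_{-}\rho}(\xi)\,\hat{K}(\xi;\mathcal{R})$, and a direct Fourier calculation yields $\hat{K}(\xi;\mathcal{R})=2(1-M^{-1})\prod_m\chi_{\varrho_m}(\xi/\sqrt{M})$. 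Setting $\xi=\sqrt{M}(\xi_n-\xi_{n'})$ then shows that $[\mathbf{C}\circ\mathbf{K}]_{n,n'}$ is a positive constant multiple of $\chi_{\widetilde{W}_{\tr_{-}\rho}}(\sqrt{M}(\xi_n-\xi_{n'}))$.

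The qualitative direction follows from a standard Bochner argument. For any $\vec{v}$, Fourier-inverting each entry gives $\vec{v}^\dagger(\mathbf{C}\circ\mathbf{K})\vec{v}\propto \int d^2\alpha\,\widetilde{W}_{\tr_{-}\rho}(\alpha;\mathcal{R})\,|g(\alpha)|^2$ with the trigonometric polynomial $g(\alpha)=\sum_n v_n e^{\sqrt{M}(\xi_n^*\alpha-\xi_n\alpha^*)}$, so pointwise non-negativity of $\widetilde{W}_{\tr_{-}\rho}$ forces $\mathbf{C}\circ\mathbf{K}\succeq 0$. By \cref{thm:GMN-reduced-state}, every non-GME state $\sigma$ has $\widetilde{W}_{\tr_{-}\sigma}\geq 0$ pointwise, hence $\mathbf{C}(\sigma;\Xi)\circ\mathbf{K}(\mathcal{R};\Xi)\succeq 0$, and the contrapositive is the claimed GME criterion.

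For the quantitative trace-distance bound I would use an eigenvalue-perturbation estimate. Let $\vec{v}$ be a unit eigenvector of $\mathbf{C}(\rho;\Xi)\circ\mathbf{K}(\mathcal{R};\Xi)$ attaining the minimum eigenvalue $-\mathcal{N}_C$, and let $\sigma$ be any non-GME state. Since $\mathbf{C}(\sigma)\circ\mathbf{K}\succeq 0$, one has $\mathcal{N}_C\leq \vec{v}^\dagger((\mathbf{C}(\sigma)-\mathbf{C}(\rho))\circ\mathbf{K})\vec{v}$. Writing $\chi_{\tr_{-}\tau}(\sqrt{M}\xi) = \tr(\tau\,(D_{+}(\sqrt{M}\xi)\otimes\mathbbm{1}_{-}))$ recasts the right-hand side as $\tr((\sigma-\rho)Y(\vec{v}))$, with $Y(\vec{v})=\frac{1}{N}\sum_{n,n'}v_n^*v_{n'}\prod_m\chi_{\varrho_m}(\xi_n-\xi_{n'})\,D_{+}(\sqrt{M}(\xi_n-\xi_{n'}))\otimes\mathbbm{1}_{-}$. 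Unitarity of each displacement, $|\chi_{\varrho_m}|\leq 1$, and Cauchy--Schwarz combine to give $\|Y(\vec{v})\|_\infty\leq \frac{1}{N}(\sum_n|v_n|)^2\leq 1$, and H\"older's inequality then yields $\mathcal{N}_C\leq \|\sigma-\rho\|_1$; minimising over non-GME $\sigma$ closes the proof. The main obstacle is the Fourier bookkeeping that pins down the proportionality constant in the first step---especially matching the $\sqrt{M}$ rescaling between $a_{+}$ and the original modes---after which the remaining bounds rest only on the uniform estimates $|\chi_{\varrho_m}|\leq 1$ and $\|D_{+}(\cdot)\|_\infty=1$ and proceed without further subtlety.
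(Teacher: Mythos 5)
Your proposal is correct and follows essentially the same route as the paper's proof: Bochner's theorem applied to the (nonnegative, for non-GME states) smoothed centre-of-mass Wigner function guaranteed by \cref{thm:GMN-reduced-state} gives the qualitative criterion, and the minimum-eigenvalue eigenvector together with a uniform operator-norm bound on the resulting witness $Y(\vec{v})$ and trace-norm duality gives the quantitative bound. The only differences are cosmetic---you identify $\mathbf{C}\circ\mathbf{K}$ with the Bochner matrix of $\widetilde{W}_{\tr_-\rho}$ via the convolution theorem rather than via the characteristic function of the extended state $R=\rho\otimes\bigotimes_m\varrho_m$, and you bound $\|Y(\vec{v})\|_\infty$ by the triangle inequality plus Cauchy--Schwarz rather than by the column-sum matrix-norm inequality---and both steps are sound.
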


This GME criterion requires measurements of less than $N^2$ points of the characteristic function along the two-dimensional slice $\{\xi\vec{y} + \xi^*\vec{z} : \xi \in \mathbb{C}\}$ of phase space.
Furthermore, it can be implemented with either local  $\chi_\rho(\vec{\xi})$ or center-of-mass $\chi_{\tr_{-}\rho}(\xi)$ characteristic function measurements.
The latter can be directly implemented in trapped ion systems by coupling the readout laser to a normal mode of the collective motion of the ions \cite{trappedIons1,trappedIons2}.

As an example, take the tripartite W state from before. Its characteristic function along the two-dimensional slice $\xi\vec{1}$ is plotted in \cref{fig:W-state-combined}.
In order to obtain $\mathcal{N}_C\pqty{\ketW[3];\Xi,\mathcal{R}}$ for $\Xi = \{0,\pm (\xi_0 + \xi_0^*),\pm \xi_0,\pm \xi_0^*\}$ with $\xi_0 = (85 + i147)/200$, values of the characteristic function at the points crossed out in \cref{fig:W-state-combined} must be found.
Due to the symmetry $\chi_\rho(-\vec{\xi}) = \chi_\rho^*(\vec{\xi})$, just ten points $\{\xi_n-\xi_{n'}\}_{n \geq n'}$ in phase space have to be actually measured to obtain $\mathcal{N}_C\pqty{\ketW[3];\Xi,\{\ket{0}\}} = 0.0176 > 0$, which certifies the GME of the tripartite W state using \cref{col:GME-characteristic-witness}.

Further detailed examples, case studies of detected states, and the impact of practical effects like losses can be found in our companion paper \cite{CV-GME-paper}.
Families of states detected by both \cref{thm:GME-2D-slice,thm:GMN-reduced-state} are given for the full range of $3 \leq M < \infty$.
However, we also found that the robustness decreases with the number of modes.
This shows that our criteria can detect GME for any finite $M$, although they will be challenging to implement for very large $M$ in the presence of noise.

\isection{Conclusion}%
In this work, we established two theorems that relate the Wigner negativity of a multimode continuous-variable system with the presence of genuine multipartite entanglement (GME).
The first theorem states that a sufficiently-large negativity volume along a particular two-dimensional slice of the Wigner function implies GME, while the second theorem states that the presence of Wigner negativity of the center-of-mass mode, even after smoothed with a suitably-chosen filter function, implies GME.
Quantitatively, the pointwise value of the smoothed Wigner function in the latter theorem also bounds the geometric distance between the detected state and the set of non-GME states.

By themselves, these theorems are notable as they fundamentally link two different notions of nonclassicality: one in the sense of quasiprobabilities, and the other in the sense of correlations.
A consequence of the second theorem also identifies sufficient conditions to generate GME by interfering a state with the vacuum via maximally-mixing multimode interferometers, which complement necessary conditions well-known in the literature.

Beyond the above foundational interests, our findings also have important implications in the field of circuit and cavity quantum electrodynamics, circuit quantum acoustodynamics, and trapped ions and atoms.
In such systems, direct measurements of Wigner or characteristic functions are routinely performed, and sometimes constitute the native readout available in the specific experimental platform.
Using our theorems, we show that it is possible to construct GME criteria that rely only on performing Wigner function measurements over a finite region, or only characteristic function measurements of a finite number of points, both over a suitably-chosen two-dimensional slice of phase space.
Our criteria are therefore easily implementable in such systems, where existing continuous-variable GME criteria can be difficult to implement due to the unavailability of direct quadrature measurements.
In our companion paper, we further extend these findings to construct more GME criteria that rely on controlled-unitary operations and qubit measurements available in these architectures \cite{CV-GME-paper}.

Our theorems also raise many interesting questions that could be possible avenues for further research.
Foundationally, we anticipate that similar techniques could be developed to relate nonclassicality to other types or structures of entanglement, like entanglement depth or non-passive entanglement.
Operationally, while we provide some rules of thumb in the companion paper on the choice of a 2D slice to detect GME using our criteria, the optimal choice for a general state remains an open question.
It might also be possible to extend our criteria to detect even more GME states, for example by looking at multiple 2D slices at once, or even considering higher dimensional slices.
However, these considerations are beyond the scope of our current work, and we leave it to future efforts in addressing these questions.

\isection{Acknowledgments}%
This work is supported by the National Research Foundation, Singapore, under its Centre for Quantum Technologies Funding Initiative (S24Q2d0009), the National Natural Science Foundation of China (Grants No. 12125402, No. 12534016, No. 12405005, No. 12505010), the Beijing Natural Science Foundation (Grant No. Z240007), and the Quantum Science and Technology-National Science and Technology Major Project (No. 2024ZD0302401, No. 2021ZD0301500).
J.G. acknowledges the support of the Postdoctoral Fellowship Program of CPSF (No. GZB20240027), and the China Postdoctoral Science Foundation (No. 2024M760072).
M.F. was supported by the Swiss National Science Foundation Ambizione Grant No. 208886, and by The Branco Weiss Fellowship -- Society in Science, administered by the ETH Z\"{u}rich.
S.L. acknowledges the China Postdoctoral Science Foundation (No. 2023M740119).

\bibliography{refs}

\onecolumngrid
\begin{center}\textbf{End Matter}\end{center}
\twocolumngrid

\isection{Details of plotted state in \cref{fig:Overview}}%
The state chosen to illustrate our theorems is given by
\begin{equation}
\begin{aligned}
|\psi\rangle= & \frac{1}{5 \sqrt{2}}\left(a_{+}^{\dagger}+\frac{a_{+}^{\dagger 3}}{\sqrt{3!}}\right)\left(1+\sqrt{19} a_{-}^{\dagger}\right)|000\rangle \\
& +\frac{1}{\sqrt{10}}\left(\frac{a_{+}^{\dagger2}}{\sqrt{2!}}+\frac{a_{+}^{\dagger4}}{\sqrt{4!}}\right)\left(\frac{a_{-}^{\dagger2}}{\sqrt{2!}}\right)|000\rangle
\end{aligned}
\end{equation}
where we defined $\sqrt{3}a_{+} \coloneqq \sum_{m=1}^3 a_m$ and $\sqrt{6}a_{-} \coloneqq 2 a_1 - a_2 - a_3$.
Analytical expressions of Wigner functions of states given in the Fock basis are known \cite{WignerMatrixElements}, from which $W_{\ket{\psi}}(\vec{\alpha})$ can be directly computed.
To simplify the notation for the Wigner function coordinates, we similarly introduce $\sqrt{3}\alpha_+ \coloneqq \sum_{m=1}^3\alpha_m$ and $\sqrt{6}\alpha_- \coloneqq 2\alpha_1-\alpha_2-\alpha_3$.
The resulting Wigner function is plotted for the three-dimensional cut $\Im[2\alpha_1-\alpha_2-\alpha_3] = 0$ and $\alpha_2=\alpha_3$ in \cref{fig:Overview} with the axes $\Re[\alpha_+]$ (\emph{left-to-right of page}), $\Im[\alpha_+]$ (\emph{towards the page}), and $\Re[\alpha_-]$ (\emph{bottom-to-top of page}).

From the two-dimensional slice $\vec{\alpha} = \alpha\vec{1}$, which corresponds to setting $\Re[\alpha_-] = 0$ in the figure, $\mathcal{N}_{2D}^{\text{GME}}(\ket{\psi}) = (75\sqrt{2}+56)/600 < 0.2702 < \mathcal{N}_{2D}(\ket{\psi})$.
Therefore, $\ket{\psi}$ is detected by \cref{thm:GME-2D-slice}.

Meanwhile, when smoothing the Wigner function of its center-of-mass mode with the kernel $K(\alpha) = 8e^{-6\abs{\alpha}^2}/\pi$ from \cref{eq:filter-function-Gaussian}, the resulting function yields the negative value $\widetilde{W}_{\ket{\psi}}(0) = -7/16\pi$ at the origin. Therefore, the GME of $\ket{\psi}$ can also be detected by \cref{thm:GMN-reduced-state}.

\isection{Details of plotted state in \cref{fig:W-state-combined}}%
The Wigner function of the tripartite W state, again found using the analytical expression of Wigner functions given in the Fock basis \cite{WignerMatrixElements}, is $W_{\ketW[3]}(\vec{\alpha}) = (2/\pi)^3  (4|\vec{\alpha}\cdot\vec{1}|^2/3 - 1)e^{-2\abs{\alpha}^2}$. Its absolute volume can be directly calculated to be
\begin{equation}
    \mathcal{V}_{2D}(\ketW[3];\omega_{r}) = \begin{cases}
        \frac{(1+12r^2)e^{-6r^2}-1}{3} & \text{if $r < \frac{1}{2\sqrt{3}}$,}\\
        \frac{4e^{-1/2}-(1+12r^2)e^{-6r^2}-1}{3} & \text{otherwise.}
    \end{cases}
\end{equation}
Meanwhile, its characteristic function is $\chi_{\ketW[3]}(\vec{\xi}) = (1-|\vec{\xi}\cdot\vec{1}|^2/3)e^{-\abs{\vec{\xi}}^2/2}$.
For the chosen $\Xi = \{0,\pm(\xi_0+\xi_0^*),\pm\xi_0,\pm\xi_0^*\}$, the 19 plotted points $\{\xi-\xi' : \xi,\xi' \in \Xi\}$ are
\begin{equation}
\begin{aligned}
    \big\{
        0,
        \pm \xi_0,
        \pm \xi_0^*,
    &\pm 2\xi_0,
        \pm 2\xi_0^*,
        \pm \xi_0 \pm \xi_0^*,
        \pm \xi_0 \mp \xi_0^*,\\
    &\pm 2\xi_0 \pm \xi_0^*,
        \pm \xi_0 \pm 2\xi_0^*,
        \pm 2\xi_0 \pm 2\xi_0^*
    \big\}
\end{aligned}
\end{equation}

\onecolumngrid

\clearpage

\phantomsection\addcontentsline{toc}{part}{Supplemental Material for: Sufficient Wigner Negativity Implies Genuine Multipartite Entanglement}
\title{Supplemental Material for: Sufficient Wigner Negativity Implies Genuine Multipartite Entanglement}

\maketitle
\onecolumngrid

\setcounter{section}{0}
\setcounter{figure}{0}
\setcounter{equation}{0}

\renewcommand{\thesection}{S\arabic{section}}
\renewcommand{\thefigure}{S\arabic{figure}}
\renewcommand{\theequation}{S\arabic{equation}}

\section{\label{apd:GME-2D-slice}Proof of Theorem~\ref*{thm:GME-2D-slice}}
We begin by proving the following lemma
\begin{lemma}\label{lem:Wigner-absolute-volume-all}
For all $M$-mode states $\rho$ and $\vec{y},\vec{z}\in\mathbb{C}^M$ such that $\vec{y}\circ\vec{y}^*-\vec{z}\circ\vec{z}^* = \vec{1}$,
\begin{equation}\label{eq:Criterion2lemma}
\begin{aligned}
    \abs{
        \int_{\mathbb{C}}\dd[2]{\alpha} W_\rho(\alpha\vec{y}+\alpha^*\vec{z})
    } &\leq \frac{1}{M}\pqty{\frac{2}{\pi}}^{M-1}, &
\int_{\mathbb{C}}\dd[2]{\alpha} \abs{W_\rho(\alpha\vec{y}+\alpha^*\vec{z})}^2 &\leq \frac{1}{2M}\pqty{\frac{2}{\pi}}^{2M-1}.
\end{aligned}
\end{equation}
\end{lemma}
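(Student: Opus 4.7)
The constraint $\vec{y}\circ\vec{y}^*-\vec{z}\circ\vec{z}^*=\vec{1}$ says exactly that each pair $(y_m,z_m)$ parametrizes a single-mode Bogoliubov (squeeze) transformation. Letting $S_m$ denote the squeeze on mode $m$ satisfying $S_m^{\dagger}a_m S_m = y_m a_m + z_m a_m^{\dagger}$ and $S=\prod_m S_m$, the standard transformation law for Wigner functions under Gaussian unitaries gives $W_{S^{\dagger}\rho S}(\alpha\vec{1}) = W_\rho(\alpha\vec{y}+\alpha^*\vec{z})$. Hence both inequalities in the lemma reduce to the special case $\vec{y}=\vec{1},\vec{z}=\vec{0}$ applied to $\rho'=S^{\dagger}\rho S$, and I assume this simplification henceforth.

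For the first inequality, I would combine the displaced-parity form $W_\rho(\vec\alpha)=(2/\pi)^M\tr[\rho\,\Pi(\vec\alpha)]$ with $\Pi(\vec\alpha)=\prod_m\Pi_m(\alpha_m)$ and the Fourier representation $\Pi_m(\alpha)=(2\pi)^{-1}\int\dd[2]{\lambda}\,D_m(\lambda)\,e^{\alpha\lambda^*-\alpha^*\lambda}$. Swapping the orders of integration and using $\int\dd[2]{\alpha}\,e^{\alpha\Lambda^*-\alpha^*\Lambda}=\pi^2\delta^2(\Lambda)$ collapses the $\alpha$-integral into a delta $\delta^2(\sum_m\lambda_m)$. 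Changing variables to a centre-of-mass parameter $\lambda_+=M^{-1/2}\sum_m\lambda_m$ and any orthonormal basis $\{\lambda_{-m}\}_{m=2}^M$ of relative parameters, the delta pins $\lambda_+=0$ and contributes a factor $1/M$ from $\delta^2(\sqrt{M}\lambda_+)$, while each of the $M-1$ surviving integrals yields a parity via $\int\dd[2]{\lambda}\,D(\lambda)=2\pi(-1)^{\hat N}$. Collecting constants gives $\int\dd[2]{\alpha}\,\Pi(\alpha\vec{1})=(\pi/2M)(-1)^{\hat N_-}$ with $\hat N_-=\sum_{m=2}^M a_{-m}^{\dagger}a_{-m}$, so $\int\dd[2]{\alpha}\,W_\rho(\alpha\vec{1})=M^{-1}(2/\pi)^{M-1}\tr[\rho(-1)^{\hat N_-}]$, and the bound follows because $(-1)^{\hat N_-}$ has operator norm one.

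For the second inequality, the Wigner function is real, so $|W_\rho(\alpha\vec{1})|^2=(2/\pi)^{2M}\tr[\rho\otimes\rho\cdot\Pi(\alpha\vec{1})\otimes\Pi(\alpha\vec{1})]$ and the same machinery applies on the doubled Hilbert space. The cleanest route uses a two-step change of variables: first a beam-splitter on each mode pair, $\lambda_m^\pm=(\lambda_m\pm\mu_m)/\sqrt{2}$ with $\lambda_m,\mu_m$ the displacement parameters on the two copies, which factors $D_m^{(1)}(\lambda_m)D_m^{(2)}(\mu_m)=D_m^+(\lambda_m^+)D_m^-(\lambda_m^-)$ and reduces the delta to $\delta^2(\sum_m\lambda_m^+)$; and second a centre-of-mass/relative split within the $+$-sector, which pins only its centre of mass. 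The $M$ free $-$-integrals and the $M-1$ free $+$-relative integrals each produce a parity factor, and collecting constants gives $\int\dd[2]{\alpha}\,\Pi(\alpha\vec{1})\otimes\Pi(\alpha\vec{1})=(\pi/4M)(-1)^{\hat N^-_{\text{tot}}+\hat N^+_{\text{rel}}}$, whence the second inequality follows by the same operator-norm argument applied to the state $\rho\otimes\rho$.

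The main obstacle is bookkeeping the constants in the doubled-space argument: the two-layer change of variables produces Jacobians from the beam-splitter and from the $\sqrt{M}$ inside the delta, and these must combine with the $(2\pi)$ factors from each of the $2M-1$ surviving parity integrals to yield exactly $(2M)^{-1}(2/\pi)^{2M-1}$. Conceptually, however, both inequalities rest on the same mechanism: integrating a product of displaced parities over a one-complex-dimensional diagonal slice of phase space leaves a parity operator on the appropriate relative modes, whose $\pm 1$ spectrum provides the stated bound.
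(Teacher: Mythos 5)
Your proposal is correct, and I verified that your constants work out: the $\alpha$-integral collapses to $\pi^2\delta^2(\sum_m\lambda_m)$, the rescalings of the delta contribute $1/M$ (respectively $1/2M$ in the doubled space), and the surviving parity integrals give exactly $\int\dd[2]{\alpha}\,\Pi(\alpha\vec{1})=(\pi/2M)(-1)^{\hat N_-}$ and $\int\dd[2]{\alpha}\,\Pi(\alpha\vec{1})\otimes\Pi(\alpha\vec{1})=(\pi/4M)(-1)^{\hat N^-_{\mathrm{tot}}+\hat N^+_{\mathrm{rel}}}$, which reproduce the bounds $M^{-1}(2/\pi)^{M-1}$ and $(2M)^{-1}(2/\pi)^{2M-1}$. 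Your reduction to $\vec{y}=\vec{1},\vec{z}=\vec{0}$ by local Bogoliubov unitaries is the same move the paper makes at the end of its proof. For the first inequality the two arguments are essentially the same mechanism in different clothing: the paper rewrites $W_\rho(\alpha\vec{1})$ as a single-mode Wigner function of $R_+=\tr_-(\Pi_-\rho)$ and uses the trace formula to land on $\abs{\tr(\rho\Pi_-)}\leq 1$, which is precisely your $\abs{\tr[\rho(-1)^{\hat N_-}]}\leq 1$ obtained instead by the Fourier/delta computation. The genuine difference is the second inequality: the paper bounds $\int\abs{W}^2\propto\tr(R_+^2)=\norm{R_+}_2^2\leq\norm{R_+}_1^2\leq 1$ via the Hilbert--Schmidt-to-trace-norm inequality and the variational characterization of the trace norm, whereas you double the Hilbert space, write $\abs{W}^2$ as an expectation value on $\rho\otimes\rho$, and reduce the whole integral to a single parity operator whose norm is one, so no norm inequalities are needed. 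Your route is more elementary and treats both inequalities with one mechanism, at the price of the heavier Jacobian bookkeeping you flag (which, as checked above, does close); the paper's $R_+$ route avoids distributional manipulations, and as a side benefit exhibits the tightness of the second bound for states $\ket{\psi}_+\otimes\ket{\phi}_-$ with $\ket{\phi}_-$ a parity eigenstate. The two are ultimately bounding the same quantity, since the swap operator implicit in $\tr(R_+^2)$ equals the parity of the antisymmetric combination of the two copies of the centre-of-mass mode, so your doubled-space parity expectation coincides with $\tr(R_+^2)$.
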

\begin{proof}
    Let us first prove this for $\vec{y} = \vec{1}$ and $\vec{z} = (0,0,\dots,0)$.
    Consider the collective modes $\vec{a}_{+} = (a_{+1},a_{+2},\dots,a_{+M}) \coloneqq \mathbf{U}\vec{a}$, where $\mathbf{U}$ is an $M \times M$ unitary matrix chosen such that $\sqrt{M}a_{+} \coloneqq \sqrt{M}a_{+1} = \vec{1}^\dagger\vec{a} = \sum_{m=1}^{M}a_m$.
    Then, the Wigner function $W^{(+)}_{\rho}(\vec{\alpha}_+)$ of $\rho$ written in terms of the collective coordinates $\vec{\alpha}_+ = (\alpha_{+m})_{m=1}^M$ is
    \begin{equation}
        W^{(+)}_{\rho}(\vec{\alpha}_+) = \pqty{\frac{2}{\pi}}^{M} \tr(
        \rho \; e^{i\pi \abs{\vec{a}_+-\vec{\alpha}_+}^2}
        ).
    \end{equation}
    The key first step is to notice that
    \begin{equation}\label{eq:Wigner-same-arg-is-collective}
    \begin{aligned}
        W_\rho(\alpha\vec{1})
        &= \pqty{\frac{2}{\pi}}^M\tr(\rho \;
            e^{i\pi\abs{
                \vec{a} - \alpha\vec{1}
            }^2}
        ) = \pqty{\frac{2}{\pi}}^M\tr(\rho \;
            e^{i\pi\pqty{
                    \abs{\vec{a}}^2 -
                    \alpha^*\vec{1}^\dagger\vec{a} -
                    \alpha \vec{a}^\dag\vec{1} +
                    M\abs{\alpha}^2
            }}
        ) \\
        &= \pqty{\frac{2}{\pi}}^M\tr(\rho \;
            e^{i\pi\pqty{
                    \abs{
                        \vec{1}^\dag\vec{a}/\sqrt{M} -
                        \sqrt{M}\alpha
                    }^2 +
                    \abs{\vec{a}}^2 -
                    \abs{\vec{1}^\dag\vec{a}/\sqrt{M}}^2
            }}
        )\\
        &= \pqty{\frac{2}{\pi}}^M\tr(\rho \; e^{i\pi\pqty{
            \abs{a_+ - \sqrt{M}\alpha}^2 +
            \abs{\vec{a}_+}^2 -
            \abs{\alpha_{+}}^2
        }}) \\
        &= \pqty{\frac{2}{\pi}}^M\tr(\rho \; e^{i\pi\pqty{
            \abs{a_+ - \sqrt{M}\alpha}^2 +
            \sum_{m=2}^{M}\abs{\alpha_{+m}}^2
        }}) \\
        &= W_\rho^{(+)}(\sqrt{M}\alpha,0,\cdots,0).
    \end{aligned}
    \end{equation}
    Now, let us split the operators in the expectation value into the displaced parity operator $\Pi_+(\alpha) \coloneqq {\exp}(i\pi |a_{+}-\alpha|^2)$ of the center-of-mass mode and the parity operator $\Pi_- \coloneqq {\exp}(i\pi\sum_{m=2}^{M} \abs{a_{+m}}^2)$ of the other collective modes.
    We shall also denote the partial trace over the center-of-mass mode (relative modes) as $\tr_+$ ($\tr_-$).
    Then, \cite{quasiprobability-review}
    \begin{equation}
    \begin{aligned}
        W_\rho(\alpha\vec{1}) &= W_\rho^{(+)}(\sqrt{M}\alpha,0,\dots,0) \\
        &= \pqty{\frac{2}{\pi}}^{M} \tr[
            \Pi_+(\sqrt{M}\alpha) \Pi_- \rho
        ] \\
        &= \pqty{\frac{2}{\pi}}^{M} \tr_+\!\bqty\Big{
            \Pi_+(\sqrt{M}\alpha) \underbrace{\tr_-(\Pi_- \rho)}_{=:R_+}
        } \\
        &= \frac{1}{2}\pqty{\frac{2}{\pi}}^{M} W^{(+)}_{R_+}(\sqrt{M}\alpha),\\
    \end{aligned}
    \end{equation}
    where $R_+$ is a Hermitian operator defined on the $a_{+}$ mode.
    In other words, we can treat the $M$-mode Wigner function $W_\rho(\alpha\vec{1})$ of $\rho$ on the local modes as a single-mode Wigner function $W^{(+)}_{R_+}(\sqrt{M}\alpha)$ of $R_+$ on the center-of-mass mode.
    Note here that we are treating $R_+$ as a generic observable, instead of a state, since $R_+$ is not positive semidefinite in general.
    This accounts for the difference in a factor of $\pi$ in the normalization of $W_{R_+}^{(+)}(\alpha)$ compared to the normalization of the Wigner function of a state \cite{quasiprobability-review}.
    This means that
    \begin{equation}
    \begin{aligned}
        \abs{
            \int_{\mathbb{C}}\dd[2]{\alpha} W_\rho(\alpha\vec{1})
        }
        &= \abs{
            \frac{1}{2}\pqty{\frac{2}{\pi}}^M \int_{\mathbb{C}}\dd[2]{\alpha} W^{(+)}_{R_+}(\sqrt{M}\alpha)
        }
        = \frac{1}{M}\pqty{\frac{2}{\pi}}^{M-1}
        \abs{
            \frac{1}{\pi}\int_{\mathbb{C}}\dd[2]{\alpha} W^{(+)}_{R_+}(\alpha)
        }\\
        &= \frac{1}{M}\pqty{\frac{2}{\pi}}^{M-1}
        \abs\big{
            \tr(R_+)
        }
        = \frac{1}{M}\pqty{\frac{2}{\pi}}^{M-1}
        \abs\big{
            \tr(\rho \Pi_-)
        }
        \leq \frac{1}{M}\pqty{\frac{2}{\pi}}^{M-1},
    \end{aligned}
    \end{equation}
    where we have used the trace of an operator expressed in terms of its Wigner function \cite{quasiprobability-review}, and the last inequality comes from the fact that $\Pi_-$ is a unitary operator whose eigenvalues have modulus one.
    From the last inequality, it is also clear that this bound is tight for any state $\ket{\psi}_+\otimes\ket{\phi_{\pm}}_-$ where $\ket{\phi_{\pm}}_-$ is an eigenstate of $\Pi_-$ with eigenvalue $\pm 1$, since for which $\int_{\mathbb{C}}\dd[2]{\alpha} W_{\ket{\psi}_+\otimes\ket{\phi_{\pm}}_-}(\alpha) = \pm M^{-1}(2/\pi)^{M-1}$.

    Next, we have
    \begin{equation}\label{eq:Rplus-inner-product}
        \int_{\mathbb{C}}\dd[2]{\alpha} \abs{W_\rho^{(+)}(\alpha\vec{1})}^2
        = \frac{1}{4M}\pqty{\frac{2}{\pi}}^{2M} \int_{\mathbb{C}}\dd[2]{\alpha} W_{R_+}^{(+)}(\alpha) W_{R_+}^{(+)}(\alpha)
        = \frac{1}{2M}\pqty{\frac{2}{\pi}}^{2M-1} \tr(R_+^2),
    \end{equation}
    where we have used the correspondence between inner products in the Wigner function and Hilbert spaces \cite{quasiprobability-review}.
    Now, identifying $\tr(R_+^2) = \|R_+\|_2^2$ as the square of the Hilbert-Schmidt norm, \cref{eq:Rplus-inner-product} can be related to the trace norm by the inequality $\|R_+\|_2^2 \leq \|R_+\|_1^2 = (\tr|R_+|)^2$. This can be further bounded using the variational definition of the trace norm, and noting that $-\mathbbm{1}_- \preceq \Pi_- \preceq \mathbbm{1}_-$, as
    \begin{equation}
    \begin{aligned}
        \tr|R_+| &&= \sup_{-\mathbbm{1}_+\preceq M_+ \preceq\mathbbm{1}_+} \tr_+[M_+ R_+] &
        &= \sup_{-\mathbbm{1}_+\preceq M_+ \preceq\mathbbm{1}_+} \tr[M_+\Pi_-\rho] &
        &\leq \sup_{-\mathbbm{1}\preceq M \preceq\mathbbm{1}} \tr[M \rho] &
        &= \tr|\rho| = 1.
    \end{aligned}
    \end{equation}
    Notice that this bound is tight for the state $\ket{\psi}_+\otimes\ket{\phi_{\pm}}_-$ from before, since $R_+ = \ketbra{\psi}_+ \tr_-(\Pi_-\ketbra{\phi_{\pm}}_-) = \pm \ketbra{\psi}_+ \implies \tr(R_+^2) = 1$.
    In particular, this is true for $\ket{\psi}_+ = \ket{0}$ and $\ket{\phi}_- = \ket{0}^{\otimes M-1}$.

    Finally, to complete the proof, we need to show that the above statements hold true with the replacement $W_\rho(\alpha\vec{1}) \to W_\rho(\alpha\vec{y}+\alpha^*\vec{z})$ for all $\vec{y},\vec{z} \in \mathbb{C}^M$ that satisfy $\vec{y}\circ\vec{y}^* - \vec{z}\circ\vec{z}^* = \vec{1}$.
    From the restriction $\forall m: |y_m|^2 - |z_m|^2 = 1$, the quantities $y_m$ and $z_m$ can be parameterized as $|y_m| = \cosh(r_m)$ and $|z_m| = \sinh(r_m)$ with hyperbolic angles $r_m \coloneqq \operatorname{atanh}|z_m/y_m|$, and phases $e^{-i\phi_m} \coloneqq y_m/|y_m|$ and $e^{i\varphi_m} \coloneqq - e^{i\phi_m}z_m/|z_m|$.
    Then, consider the following unitary
    \begin{equation}
        U \coloneqq \prod_{m=1}^M \exp(-i \phi_m a_m^\dag a_m) \; \exp(\frac{r}{2}\pqty{e^{-i(\phi_m+\varphi_m)}a_m^2-e^{i(\phi_m+\varphi_m)}a_m^{\dag2}}).
    \end{equation}
    which are, from right to left, local squeezing and phase shifts.
    They are Gaussian unitaries whose actions on the annihilation operators are known, from which it can be verified that $U^\dag \vec{a} U = \vec{y}\circ\vec{a} + \vec{z}\circ \vec{a}^{\dagger}$ \cite{WalschaersNonGaussian2021}.
    Further more, the Wigner function of a state transformed by such a $U$ is also known to be $W_{U \rho U^\dag}(\vec{\alpha}) = W_{\rho}(\vec{y}\circ\vec{\alpha} + \vec{z}\circ\vec{\alpha}^*)$ \cite{quasiprobability-review},
    and since the above results are true for all states $\rho$, they must also be true for $U \rho U^\dag$.
    Therefore, they hold for $W_{U \rho U^\dag}(\alpha\vec{1}) = W_{\rho}(\alpha\vec{y}+\alpha^*\vec{z})$, as desired.
\end{proof}

Building upon this lemma, we find a GME inequality concerning the absolute negativity volume along a 2D slice.

\begin{reptheorem}{thm:GME-2D-slice}[Sufficient Wigner negativity volume along a suitably-chosen two-dimensional slice implies GME]
    Choose some coefficients $\vec{y},\vec{z}\in\mathbb{C}^M:\vec{y}\circ\vec{y}^* - \vec{z}\circ\vec{z}^* = \vec{1}$.
    This specifies a two-dimensional slice $\{\alpha\vec{y} + \alpha^*\vec{z} : \alpha \in \mathbb{C}\}$ in phase space.
    Define the negativity volume of the Wigner function along this two-dimensional slice as
    \begin{equation}
        \mathcal{N}_{2D}(\rho)
        \coloneqq \pqty{\frac{\pi}{2}}^{M-1}\!\!
            \int_{\mathbb{C}}
            \dd[2]{\alpha}\left\{\begin{array}{lr}
                0 & \hspace{-6em}\text{if $W_{\rho}(\alpha\vec{y}+\alpha^*\vec{z})\geq0$,}\\[0.5ex]
                \abs\big{W_{\rho}(\alpha\vec{y}+\alpha^*\vec{z})} & \text{otherwise}.
            \end{array}\right.
    \end{equation}
    Then, $\mathcal{N}_{2D}(\rho) > \mathcal{N}_{2D}^{\operatorname{GME}}(\rho)$ implies that $\rho$ is GME, where
    \begin{equation}
        \mathcal{N}_{2D}^{\operatorname{GME}}(\rho)
        \coloneqq \frac{1}{4\sqrt{M-1}} -
        \frac{\pi^{M-1}}{2^M} \!\! \int_{\mathbb{C}}
        \dd[2]{\alpha} W_{\rho}(\alpha \vec{y}+\alpha^* \vec{z}).
    \end{equation}
\end{reptheorem}
\begin{proof}
    Let us first rewrite the integrand as
    \begin{equation}
        \frac{1}{2}\pqty\Big{
            \abs{
                W_{\rho}(\alpha\vec{y}+\alpha^*\vec{z})
            } -
            W_{\rho}(\alpha\vec{y}+\alpha^*\vec{z})
        } = \begin{cases}
                0 & \text{if $W_{\rho}(\alpha\vec{y}+\alpha^*\vec{z})\geq0$,} \\
            \abs{W_{\rho}(\alpha\vec{y}+\alpha^*\vec{z})} & \text{otherwise}.
        \end{cases}
    \end{equation}
    Doing so, we have
    \begin{equation}\label{eq:GME-2D-slice-proof-0}
        \mathcal{N}_{2D}(\rho)
        = \frac{1}{2}\pqty{\frac{\pi}{2}}^{M-1}\int_{
            \mathbb{C}
        } \dd[2]{\alpha} \abs{
            W_{\rho}(\alpha\vec{y}+\alpha^*\vec{z})
        } - \frac{1}{2}\pqty{\frac{\pi}{2}}^{M-1}\int_{
            \mathbb{C}
        } \dd[2]{\alpha}
            W_{\rho}(\alpha\vec{y}+\alpha^*\vec{z}).
    \end{equation}
    Now, take $\rho = \rho_{\mathcal{A}}\otimes\rho_{\bar{\mathcal{A}}}$ separable over the bipartition $\mathcal{A}=\{m_n\}_{n=1}^{\abs{\mathcal{A}}}$ and $\bar{\mathcal{A}}=\{m\}_{m=1}^M \setminus \mathcal{A}$, where $1\leq \abs{\mathcal{A}}<M$ and $\rho_{\{m_1, m_2, \ldots, m_{\abs{\mathcal{A}}}\}}$ are states defined locally on the $\{a_{m_1}, a_{m_2}, \ldots, a_{m_{\abs{\mathcal{A}}}}\}$ modes.
    Let us also write $\vec{y}_{\mathcal{A}} = (y_m)_{m \in \mathcal{A}}$, $\vec{y}_{\bar{\mathcal{A}}} = (y_m)_{m \in \mathcal{A}}$, with $\vec{z}_{\mathcal{A}}$ and $\vec{z}_{\bar{\mathcal{A}}}$ analogously defined.
    Then, the first term is bounded as
    \begin{equation}
    \begin{aligned}
        \rho = \rho_{\mathcal{A}}\otimes\rho_{\bar{\mathcal{A}}} \implies \int_{\mathbb{C}}\dd[2]{\alpha} \abs{W_{\rho}(\alpha\vec{y}+\alpha^*\vec{z})} &= \int_{\mathbb{C}}\dd[2]{\alpha} \abs{
            W_{\rho_{\mathcal{A}}}(\alpha\vec{y}_{\mathcal{A}}+\alpha^*\vec{z}_{\mathcal{A}})
            W_{\rho_{\bar{\mathcal{A}}}}(\alpha\vec{y}_{\bar{\mathcal{A}}}+\alpha^*\vec{z}_{\bar{\mathcal{A}}})
        } \\
        &\leq \sqrt{
            \pqty{\int_{\mathbb{C}}\dd[2]{\alpha} \abs{
                W_{\rho_{\mathcal{A}}}(\alpha\vec{y}_{\mathcal{A}}+\alpha^*\vec{z}_{\mathcal{A}})
            }^2}
            \pqty{\int_{\mathbb{C}}\dd[2]{\alpha} \abs{
                W_{\rho_{\bar{\mathcal{A}}}(\alpha\vec{y}_{\bar{\mathcal{A}}}+\alpha^*\vec{z}_{\bar{\mathcal{A}}})})
            }^2}
        } \\
        &\leq \sqrt{\frac{1}{2\abs{\mathcal{A}}}
        \pqty{\frac{2}{\pi}}^{2\abs{\mathcal{A}}-1}
        \frac{1}{2(M-\abs{\mathcal{A}})}\pqty{\frac{2}{\pi}}^{2(M-\abs{\mathcal{A}})-1}} \\
        &= \frac{1}{2\sqrt{\abs{\mathcal{A}}(M-\abs{\mathcal{A}})}}\pqty{\frac{2}{\pi}}^{M-1},
    \end{aligned}
    \end{equation}
    where we used the Cauchy--Schwarz inequality in the second line and \cref{lem:Wigner-absolute-volume-all} in the penultimate line.
    Therefore, for any $\rho \notin \mathcal{S}_{\operatorname{GME}} \implies \rho = \sum_{(\mathcal{A} \mid \bar{\mathcal{A}})} p_{\mathcal{A}} \rho_{\mathcal{A}} \otimes \rho_{\bar{\mathcal{A}}}$ that is a convex combination over all bipartitions $(\mathcal{A}\mid\bar{\mathcal{A}})$,
    \begin{equation}
    \begin{aligned}
        \rho \notin \mathcal{S}_{\operatorname{GME}} \implies \int_{\mathbb{C}}\dd[2]{\alpha} \abs{W_\rho(\alpha\vec{y}+\alpha^*\vec{z}))} &=
        \int_{\mathbb{C}}\dd[2]{\alpha} \abs{
            \sum_{(\mathcal{A} \mid \bar{\mathcal{A}})} p_{\mathcal{A}}\; W_{\rho_{\mathcal{A}}}(\alpha\vec{y}_{\mathcal{A}}+\alpha^*\vec{z}_{\mathcal{A}})
            W_{\rho_{\bar{\mathcal{A}}}}(\alpha\vec{y}_{\bar{\mathcal{A}}}+\alpha^*\vec{z}_{\bar{\mathcal{A}}})
        } \\
        &\leq
        \sum_{(\mathcal{A} \mid \bar{\mathcal{A}})} p_{\mathcal{A}} \int_{\mathbb{C}}\dd[2]{\alpha} \abs{
            W_{\rho_{\mathcal{A}}}(\alpha\vec{y}_{\mathcal{A}}+\alpha^*\vec{z}_{\mathcal{A}})
            W_{\rho_{\bar{\mathcal{A}}}}(\alpha\vec{y}_{\bar{\mathcal{A}}}+\alpha^*\vec{z}_{\bar{\mathcal{A}}})
        } \\
         &\leq \sum_{(\mathcal{A} \mid \bar{\mathcal{A}})} p_{\mathcal{A}} \; \frac{1}{2\sqrt{\abs{\mathcal{A}}(M-\abs{\mathcal{A}})}}\pqty{\frac{2}{\pi}}^{M-1} \\
         &\leq \max_{1\leq \abs{\mathcal{A}} < M} \frac{1}{2\sqrt{\abs{\mathcal{A}}(M-\abs{\mathcal{A}})}}\pqty{\frac{2}{\pi}}^{M-1}  \sum_{(\mathcal{A} \mid \bar{\mathcal{A}})} p_{\mathcal{A}}  \\
         &= \frac{1}{2\sqrt{M-1}}\pqty{\frac{2}{\pi}}^{M-1}.
     \end{aligned}
    \end{equation}
    Finally, we have
    \begin{equation}
    \begin{aligned}
        \rho \notin \mathcal{S}_{\operatorname{GME}} \implies \mathcal{N}_{2D}(\rho)
        &= \frac{1}{2}\pqty{\frac{\pi}{2}}^{M-1}\int_{
            \mathbb{C}
        } \dd[2]{\alpha} \abs{
            W_{\rho}(\alpha\vec{y}+\alpha^*\vec{z})
        } - \frac{1}{2}\pqty{\frac{\pi}{2}}^{M-1}\int_{
            \mathbb{C}
        } \dd[2]{\alpha}
            W_{\rho}(\alpha\vec{y}+\alpha^*\vec{z}) \\
        &\leq \frac{1}{2}\pqty{\frac{1}{2\sqrt{M-1}}}
        - \frac{1}{2}\pqty{\frac{\pi}{2}}^{M-1}\int_{
            \mathbb{C}
        } \dd[2]{\alpha}
            W_{\rho}(\alpha\vec{y}+\alpha^*\vec{z}),
    \end{aligned}
    \end{equation}
    and taking the contraposition of this statement completes the proof of Theorem~1.
\end{proof}
It is worth noting here that the GME bound is positive and finite.
From \cref{lem:Wigner-absolute-volume-all}, we have
\begin{equation}
    \max_{\rho,\vec{y},\vec{z}}\abs{\mathcal{N}_{2D}^{\operatorname{GME}}(\rho) - \frac{1}{4\sqrt{M-1}}}
    = \frac{1}{2}\pqty{\frac{\pi}{2}}^{M-1} \max_{\rho,\vec{y},\vec{z}}\abs{
        \int_{\mathbb{C}}\dd[2]{\alpha} W_\rho(\alpha\vec{y} + \alpha^*\vec{z})
    }
    = \frac{1}{2M}.
\end{equation}
Since $M \geq 2\sqrt{M-1}$ for all $M \geq 2$, we have
\begin{equation}
    0 \leq \frac{1}{4\sqrt{M-1}} - \frac{1}{2M}
    = \min_{\rho,\vec{y},\vec{z}}\mathcal{N}_{2D}^{\operatorname{GME}}(\rho)
    \leq \mathcal{N}_{2D}^{\operatorname{GME}}(\rho)
    \leq \max_{\rho,\vec{y},\vec{z}}\mathcal{N}_{2D}^{\operatorname{GME}}(\rho)
    = \frac{1}{4\sqrt{M-1}} + \frac{1}{2M}.
\end{equation}

\section{\label{apd:GMN-reduced-state}Proof of Theorem~\ref*{thm:GMN-reduced-state}}
Let us begin by restating a previous result on bipartite entanglement that was proven in Ref.~\cite{ZawCertifiable2024}.
\begin{lemma}[Simplified and rephrased from Theorem 2 of Ref.~\cite{ZawCertifiable2024}]\label{lemma:bipartite-bound}
    Consider a system with two local modes $\{a_1,a_2\}$.
    Define also the collective modes $\sqrt{2}a_{\pm} \coloneqq (y_1 a_{1} + z_1 a_1^\dag) \pm (y_2 a_{2} + z_2 a_2^\dag)$ where $\forall m : |y_m|^2 - |z_m|^2 = 1$, the corresponding collective phase space coordinates $\alpha_{\pm}$, and the partial trace of the state $\rho$ over $a_-$ as $\tr_-\rho$.
    Then, denoting $\rho_m^{(k)}$ as a state defined locally on the $a_m$ mode, and further specifying the probabilities $p_k \geq 0$,
    \begin{equation}
        \rho = \sum_k p_k \rho_{1}^{(k)}\otimes\rho_{2}^{(k)}
        \implies W_{\tr_{-}\rho}(\alpha_+) \geq 0.
    \end{equation}
\end{lemma}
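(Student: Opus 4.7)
The plan is to prove the lemma by showing that the stated Wigner-function marginal, for each product term $\rho_1^{(k)} \otimes \rho_2^{(k)}$ in the biseparable decomposition, equals a quantum-state overlap of the form $\tfrac{2}{\pi}\tr(\rho_1^{(k)}\sigma_+^{(k)})$ for a suitable density operator $\sigma_+^{(k)}$ depending on $\alpha_+$. This is nonnegative as the trace of a product of two positive-semidefinite operators, and convex combinations with weights $p_k \geq 0$ preserve nonnegativity.

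First, I would reduce to the standard $50{:}50$ beamsplitter case $y_m = 1$, $z_m = 0$. The constraint $|y_m|^2 - |z_m|^2 = 1$ guarantees that $a_m \mapsto y_m a_m + z_m a_m^\dagger$ is a valid Bogoliubov transformation, implementable by a local symplectic (Gaussian) unitary $U_m$ acting only on mode $m$. Applying $U_1 \otimes U_2$ preserves biseparability — each $\rho_m^{(k)}$ is simply replaced by $U_m \rho_m^{(k)} U_m^\dagger$, still a valid state — and transforms the Wigner function by the corresponding symplectic change of variables. Hence it suffices to prove the lemma when $\sqrt{2}\,a_\pm = a_1 \pm a_2$.

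In that case, the Wigner function of $\tr_-\rho$ is the marginal $W_{\tr_-\rho}(\alpha_+) = \int d^2\alpha_-\, W_{\rho}\bigl((\alpha_+ + \alpha_-)/\sqrt{2},\,(\alpha_+ - \alpha_-)/\sqrt{2}\bigr)$. For a single product state $\rho_1 \otimes \rho_2$ the integrand factorises into $W_{\rho_1}\,W_{\rho_2}$. After the change of variables $\beta = (\alpha_+ + \alpha_-)/\sqrt{2}$, the marginal takes the convolution form $2\int d^2\beta\, W_{\rho_1}(\beta)\, W_{\rho_2}(\sqrt{2}\alpha_+ - \beta)$. Using the parity operator $\Pi$ (for which $W_{\Pi\rho_2\Pi^\dagger}(\alpha) = W_{\rho_2}(-\alpha)$) and the displacement operator $D(\alpha)$ (for which $W_{D(\alpha_0)\sigma D^\dagger(\alpha_0)}(\alpha) = W_\sigma(\alpha - \alpha_0)$), I would rewrite $W_{\rho_2}(\sqrt{2}\alpha_+ - \beta) = W_{\sigma_+}(\beta)$ with $\sigma_+ \coloneqq D(\sqrt{2}\alpha_+)\,\Pi\rho_2\Pi^\dagger\, D^\dagger(\sqrt{2}\alpha_+)$, which is a \emph{bona fide} density operator. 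The standard phase-space inner-product identity $\int d^2\beta\, W_A(\beta)\,W_B(\beta) \propto \tr(AB)$ then identifies the convolution with a positive multiple of $\tr(\rho_1\,\sigma_+) \geq 0$. Summing over $k$ with $p_k \geq 0$ concludes the proof.

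The main obstacle will be bookkeeping: keeping the $(2/\pi)^M$ normalisation of the Wigner function consistent, tracking the Jacobian of the beamsplitter change of variables, and making sure that the $U_1\otimes U_2$ reduction produces exactly the Wigner-function argument substitution needed. Conceptually, however, the content is transparent: a biseparable state interfered on a beamsplitter is a classical mixture of product inputs, each of which leaves a single output port in a state whose Wigner function is the convolution of two Wigner functions — equivalently a Husimi-like smoothing, or equivalently a state overlap — and hence nonnegative.
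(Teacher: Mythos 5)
Your proof is correct, but note that the paper itself does not prove this lemma at all: it is imported verbatim as ``Theorem 2 of Ref.~\cite{ZawCertifiable2024}'', so there is no in-paper argument to compare against. What you have written is a self-contained derivation of the imported result, and it is the standard one: reduce to $y_m=1$, $z_m=0$ by local Gaussian unitaries (which preserve the biseparable form, replacing $\rho_m^{(k)}$ by $U_m\rho_m^{(k)}U_m^\dagger$), observe that for a product input the $\alpha_-$-marginal of the two-mode Wigner function is the convolution $2\int d^2\beta\, W_{\rho_1}(\beta)\,W_{\rho_2}(\sqrt{2}\alpha_+-\beta)$, rewrite the second factor as the Wigner function of the legitimate state $\sigma_+=D(\sqrt{2}\alpha_+)\Pi\rho_2\Pi\,D^\dagger(\sqrt{2}\alpha_+)$, and invoke the overlap formula $\int d^2\beta\,W_A W_B\propto\tr(AB)\geq 0$ for positive operators; linearity in $p_k\geq 0$ finishes it. All the individual steps check out, including the Jacobian factor of $2$ from $d^2\alpha_-=2\,d^2\beta$ and the fact that $|y_m|^2-|z_m|^2=1$ is exactly the condition for $a_m\mapsto y_ma_m+z_ma_m^\dagger$ to be a single-mode Bogoliubov transformation. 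The only caveat worth flagging is conceptual hygiene rather than a gap: the paper's downstream use of this lemma (\cref{lemma:multipartite-bound}) relies on the $\tr_-$ appearing here being exactly the partial trace over the $a_-$ mode defined by the stated Bogoliubov coefficients, so in a polished write-up you should state explicitly that the reduced state on $a_+=(b_1+b_2)/\sqrt{2}$ of $\rho$ equals the reduced state on $(a_1+a_2)/\sqrt{2}$ of $(U_1\otimes U_2)\rho(U_1\otimes U_2)^\dagger$, which is the precise sense in which your reduction to the $50{:}50$ case is ``without loss of generality''.
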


Our next step is to lift the bipartite case into the multipartite scenario, which leads to the following lemma.
\begin{lemma}\label{lemma:multipartite-bound}
    Consider an extended system with $M+(M-2)$ modes $\{a_m\}_{m=1}^{M} \cup \{a_m\}_{m=M+1}^{2M-2}$, where the state of interest $\rho$ is defined on the first $M$ modes while the last $M-2$ modes are auxiliary modes.
    Define the center-of-mass mode of the extended system as $\sqrt{2(M-1)}a_{+} \coloneqq \sum_{m=1}^{2M-2}(y_ma_m+z_ma_m^\dag)$, where $\vec{y}\circ\vec{y}^* - \vec{z}\circ\vec{z}^* = \vec{1}$.
    Define also the relative modes $\{a_{-m}\}_{m=2}^{2M-2}$ such that $\{a_{+}\} \cup \{a_{-m}\}_{m=2}^{2M-2}$ satisfies the canonical commutation relations.
    Then, denoting $\tr_{-} \coloneqq \tr_{a_{-2}}\tr_{a_{-3}}\cdots\tr_{a_{-(2M-2)}}$ as the partial trace over all of the relative modes,
    \begin{equation}
        \rho \notin \mathcal{S}_{\operatorname{GME}} : R = \rho \otimes \bigotimes_{m=M+1}^{2M-2} \varrho_{\{m\}}
        \implies W_{\tr_{-}R}(\alpha) \geq 0,
    \end{equation}
    where $\varrho_{\{m_1, m_2, \ldots\}}$ are states defined locally on the $\{a_{m_1}, a_{m_2}, \ldots\}$ modes and $\rho$ is defined on the first $M$ modes.
\end{lemma}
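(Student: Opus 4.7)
The plan is to reduce the multipartite statement to the bipartite Lemma~\ref{lemma:bipartite-bound} by bundling the $2M-2$ modes into two groups of $M-1$ modes each via local symplectic unitaries, chosen so that the global centre-of-mass $a_+$ becomes the symmetric centre-of-mass of just two super-modes. Since both the Wigner function and the partial trace are linear in $\rho$, and $\rho \notin \operatorname{GME}$ means that $\rho$ is a convex combination of states each of which is a product across some bipartition of the $M$ physical modes, it suffices to prove the claim for a single product state $\rho = \rho_{\mathcal{A}} \otimes \rho_{\bar{\mathcal{A}}}$ over a fixed bipartition $(\mathcal{A}|\bar{\mathcal{A}})$ with $N = |\mathcal{A}| \in \{1, \dots, M-1\}$.

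The key step is to balance the bipartition using the auxiliary modes. I would distribute the $M-2$ auxiliaries by appending $M-1-N$ of them to $\mathcal{A}$ and the remaining $N-1$ to $\bar{\mathcal{A}}$, forming extended subsets $\mathcal{A}'$ and $\bar{\mathcal{A}}'$ with $|\mathcal{A}'| = |\bar{\mathcal{A}}'| = M-1$. Because each $\varrho_{\{m\}}$ is a local state, $R$ remains a product state across $(\mathcal{A}' \mid \bar{\mathcal{A}}')$. Next, for $\mathcal{S} \in \{\mathcal{A}', \bar{\mathcal{A}}'\}$, I define the bundled operator $b_{\mathcal{S}} \coloneqq (M-1)^{-1/2}\sum_{m \in \mathcal{S}}(y_m a_m + z_m a_m^\dag)$. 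Since $|\mathcal{S}| = M-1$ and $\vec{y}\circ\vec{y}^* - \vec{z}\circ\vec{z}^* = \vec{1}$, each $b_{\mathcal{S}}$ satisfies $[b_{\mathcal{S}}, b_{\mathcal{S}}^\dag] = 1$, and the two commute because they involve disjoint modes. Each $b_{\mathcal{S}}$ can therefore be extended to a full canonical basis of its $(M-1)$-mode group via a local symplectic unitary, and such local unitaries preserve the product structure of $R$ across $(\mathcal{A}' \mid \bar{\mathcal{A}}')$.

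By construction, $\sqrt{2}\, a_+ = b_{\mathcal{A}'} + b_{\bar{\mathcal{A}}'}$, so $a_+$ is the symmetric centre-of-mass of two effective modes with coefficients $(y,z) = ((1,1),(0,0))$ that satisfy the hypothesis of Lemma~\ref{lemma:bipartite-bound}. I would then perform $\tr_-$ in two stages: first tracing over the $M-2$ ``orthogonal'' bundled modes on each side yields a two-mode state on $(b_{\mathcal{A}'}, b_{\bar{\mathcal{A}}'})$ that is still a product (hence separable), and then applying Lemma~\ref{lemma:bipartite-bound} to this two-mode reduction gives $W_{\tr_- R}(\alpha) \geq 0$, as required.

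The main obstacle is the bookkeeping of the bundling step: one must verify that the balanced split $|\mathcal{A}'| = |\bar{\mathcal{A}}'| = M-1$ is achievable for every admissible $N$, which pins down exactly why $M-2$ auxiliary modes are both needed and sufficient (the partition works only because $(M-1-N) + (N-1) = M-2$), and that the local symplectic extension of each $b_{\mathcal{S}}$ to a full canonical basis respects the separability structure. These details are conceptually routine but must be arranged carefully so that the hypotheses of Lemma~\ref{lemma:bipartite-bound} apply verbatim after the reduction.
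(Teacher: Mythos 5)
Your proposal is correct and follows essentially the same route as the paper's proof: balance each bipartition $(\mathcal{A}\mid\bar{\mathcal{A}})$ with the $M-2$ auxiliary modes into two groups of $M-1$ modes, bundle each group into a collective mode so that $a_+$ becomes the symmetric centre-of-mass of the two bundles, trace out the orthogonal collective modes, apply the bipartite Lemma~\ref{lemma:bipartite-bound}, and conclude by convexity/linearity of the Wigner function and partial trace.
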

\begin{proof}
    Using both the convexity of the Wigner function $W_{\sum_k p_k \rho^{(k)}}(\vec{\alpha}) = \sum_k p_k W_{\rho^{(k)}}(\vec{\alpha})$ \cite{quasiprobability-review} and the partial trace, and defining the subnormalised state $\rho_{(\mathcal{A}\mid\bar{\mathcal{A}})} \coloneqq \sum_{k} p_{\mathcal{A}}^{(k)} \rho_{\mathcal{A}}^{(k)} \otimes \rho_{\bar{\mathcal{A}}}^{(k)} \otimes \bigotimes_{m=M+1}^{2M-2} \varrho_{\{m\}}$, we have
    \begin{equation}
    \begin{gathered}
        \rho \notin \mathcal{S}_{\operatorname{GME}} : R = \rho \otimes \bigotimes_{m=M+1}^{2M-2} \varrho_{\{m\}} = \sum_{(\mathcal{A}\mid\bar{\mathcal{A}})}
        \rho_{(\mathcal{A}\mid\bar{\mathcal{A}})} \\
    \begin{aligned}
        \implies W_{\tr_- R}(\alpha) &= \sum_{(\mathcal{A}\mid\bar{\mathcal{A}})}
        W_{
            \tr_-\rho_{(\mathcal{A}\mid\bar{\mathcal{A}})}
        }(\alpha).
    \end{aligned}
    \end{gathered}
    \end{equation}
    Now, for a given $(\mathcal{A}\mid\bar{\mathcal{A}})$, let us partition the modes into the collection $\mathcal{B} \coloneqq \mathcal{A} \cup \{m\}_{m=M+1}^{2M-1-|\mathcal{A}|}$ and $\bar{\mathcal{B}} \coloneqq \{m\}_{m=1}^{2(M-1)} \setminus \mathcal{B}$.
    Here, $(\mathcal{B}\mid\bar{\mathcal{B}})$ is an equal bipartition of the $2(M-1)$ modes since $|\mathcal{B}| = |\bar{\mathcal{B}}| = M-1$.
    This means that
    \begin{equation}
        \rho_{(\mathcal{A}\mid\bar{\mathcal{A}})} = \sum_{k} p_{\mathcal{A}}^{(k)} \underbrace{\pqty{
            \rho_{\mathcal{A}}^{(k)} \otimes
            \bigotimes_{m=M+1}^{2M-1-|\mathcal{A}|} \varrho_{\{m\}}
        }}_{\coloneqq \rho_{\mathcal{B}} } \otimes \underbrace{\pqty{
            \rho_{\bar{\mathcal{A}}}^{(k)} \otimes \bigotimes_{m=2M-|\mathcal{A}|}^{2M-2} \varrho_{\{m\}}
        }}_{\coloneqq\rho_{\bar{\mathcal{B}}}}.
    \end{equation}
    Next, define the collective modes $\{b_m = \sum_{m'\in\mathcal{B}}(u_{m,m'}a_{m'}+v_{m,m'}a_{m'}^\dagger)\}_{m=1}^{M-1}$ on the partition $\mathcal{B}$ such that $\{b_m\}_{m=1}^{M-1}$ satisfies the canonical commutation relations and $\sqrt{M-1} b_1 = \sum_{m\in\mathcal{B}}(y_ma_m + z_ma_m^\dag)$.
    Do the same with $\{\bar{b}_m = \sum_{m'\in\bar{\mathcal{B}}}(\bar{u}_{m,m'}a_{m'} + \bar{v}_{m,m'}a_{m'}\}_{m=1}^{M-1}$ on $\bar{\mathcal{B}}$ such that $\sqrt{M-1} \,\bar{b}_1 = \sum_{m\in\bar{\mathcal{B}}}(y_ma_m + z_ma_m^\dag)$.
    Denote the partial trace over $\{b_m\}_{m=2}^{M-1}$ as $\tr_{-\mathcal{B}}$ and that over $\{\bar{b}_m\}_{m=2}^{M-1}$ as $\tr_{-\bar{\mathcal{B}}}$.
    Then, we have
    \begin{equation}
        \tr_{-\mathcal{B}}\tr_{-\bar{\mathcal{B}}}\rho_{(\mathcal{A}\mid\bar{\mathcal{A}})} = \sum_{k} p_{\mathcal{A}}^{(k)} \tr_{-\mathcal{B}}\rho_{\mathcal{B}}^{(k)} \otimes \tr_{-\bar{\mathcal{B}}}\rho_{\bar{\mathcal{B}}}^{(k)}.
    \end{equation}
    At this stage, we have a separable bipartite state defined on the modes $b_1$ and $\bar{b}_1$, and thus we can directly apply \cref{lemma:bipartite-bound}.
    Define the collective modes $\sqrt{2} b_{\pm} \coloneqq b_1 \pm \bar{b}_1$ on this bipartite system, and the partial trace over $b_-$ as $\tr_{-b}$.
    Then, \cref{lemma:bipartite-bound} implies that
    \begin{equation}
        W_{\tr_{-b}\tr_{-\mathcal{B}}\tr_{-\bar{\mathcal{B}}}\rho_{(\mathcal{A}\mid\bar{\mathcal{A}})}}(\alpha_+) \geq 0.
    \end{equation}
    Now, the single-mode state that remains after $M-1$ partial traces is defined on the mode
    \begin{equation}
        b_+ = \frac{b_{1} + \bar{b}_{1}}{\sqrt{2}}
        = \frac{
            \sum_{m\in\mathcal{B}}(y_ma_m + z_ma_m^\dag) +
            \sum_{m'\in\bar{\mathcal{B}}}(y_{m'}a_{m'} + z_{m'}a_{m'}^\dag)
        }{\sqrt{2(M-1)}}
        = \frac{1}{\sqrt{2(M-1)}}\sum_{m=1}^{2M-2}(y_{m'}a_{m'} + z_{m'}a_{m'}^\dag).
    \end{equation}
    In other words, $b_+ = a_+$ for $a_+$ as defined in the statement of the lemma, and therefore $\tr_{-b}\tr_{-\mathcal{B}}\tr_{-\bar{\mathcal{B}}}\rho_{(\mathcal{A}\mid\bar{\mathcal{A}})} = \tr_-\rho_{(\mathcal{A}\mid\bar{\mathcal{A}})}$ with $\tr_-$ also as defined above.

    Since the preceding steps hold for any choice of bipartition, we have
    \begin{equation}
        \rho \notin \mathcal{S}_{\operatorname{GME}} : R = \rho  \otimes \bigotimes_{m=M+1}^{2M-2} \varrho_{\{m\}} = \sum_{(\mathcal{A}\mid\bar{\mathcal{A}})} \rho_{(\mathcal{A}\mid\bar{\mathcal{A}})}
        \implies W_{\tr_{-}R}(\alpha) = \sum_{(\mathcal{A}\mid\bar{\mathcal{A}})} W_{\tr_-\rho_{(\mathcal{A}\mid\bar{\mathcal{A}})}}(\alpha) \geq 0,
    \end{equation}
    where in the last line, we used the fact that sums of positive functions must be positive, which completes our proof.
\end{proof}

\begin{reptheorem}{thm:GMN-reduced-state}[Negativity of the smoothed Wigner function of the center-of-mass implies GME]
    Choose $M-2$ states $\mathcal{R} = \{\varrho_m\}_{m=1}^{M-2}$. Define the smoothed Wigner function of the center-of-mass of the system as
\begin{equation}
\widetilde{W}_{\tr_{-}\rho}(\alpha;\mathcal{R})\\
    \coloneqq \int_{\mathbb{C}}\dd[2]{\beta}
    \;W_{\tr_{-}\rho}(\beta)
    \;K\pqty{\alpha-\beta;\mathcal{R}},
\end{equation}
where $K(\alpha,\mathcal{R})$ is the convolution kernel
\begin{equation}
\begin{aligned}
    K(\alpha;\mathcal{R})
    &\coloneqq
    2(1-M^{-1})\int_{\mathbb{C}^{M-2}}\dd[2(M-2)]{\vec{\gamma}}
    \;\prod_{m=1}^{M-2}W_{\varrho_m}(\gamma_m)
    \;\delta\pqty{
        \alpha-\tfrac{\vec{\gamma}^T\vec{1}}{\sqrt{M}}
    }.
\end{aligned}
\end{equation}
Then, the smoothed Wigner function lower bounds, up to a factor, the trace distance to all non-GME states as
\begin{equation}
\begin{aligned}
    \max\Bqty{0,-\widetilde{W}_{\tr_{-}\rho}(\alpha;\mathcal{R})} \leq \frac{2}{\pi}\min_{\sigma \notin \mathcal{S}_{\operatorname{GME}}} \| \sigma - \rho \|_1.
\end{aligned}
\end{equation}
Hence, $\exists\alpha : \widetilde{W}_{\tr_{-}\rho}(\alpha;\mathcal{R}) < 0$ implies that $\rho$ is GME.
\end{reptheorem}
\begin{proof}
    We start with the center-of-mass Wigner function of the extended state $R \coloneqq \rho\otimes\bigotimes_{m=M+1}^{2M-2}\varrho_{\{m\}}$ from \cref{lemma:multipartite-bound} by writing it as an expectation value of $\rho$:
    \begin{equation}
    \begin{aligned}
        W_{\tr_{-} R}(\alpha) &= \frac{2}{\pi}\tr(
            \pqty{
                \rho\otimes\bigotimes_{m=M+1}^{2M-2}\varrho_{\{m\}}
            } e^{i\pi\abs{a_+-\alpha}^2}
        ) \\
        &= \frac{2}{\pi}\tr_{\{a_m\}_{m=1}^M}\pqty\Bigg{\rho\; \underbrace{\tr_{\{a_m\}_{m=M+1}^{2M-2}}\pqty{
            \bigotimes_{m=M+1}^{2M-2}\varrho_{\{m\}} e^{i\pi\abs{a_+-\alpha}^2}
        }}_{\eqqcolon V}} \\
        &= \frac{2}{\pi}\tr(\rho V).
    \end{aligned}
    \end{equation}
    Notice that $V$ is Hermitian and that $\|V\| \leq |\langle e^{i\pi\abs{a_+-\alpha}^2}\rangle| \leq 1$, which means that $-\mathbbm{1} \preceq V \preceq \mathbbm{1}$.
    Then, using the variational definition of the trace norm,
    \begin{equation}
        \sigma\notin\mathcal{S}_{\operatorname{GME}} : \|\sigma-\rho\|_1 = \sup_{-\mathbbm{1} \preceq V' \preceq \mathbbm{1}}\tr[(\sigma-\rho)V'] \geq
        \tr[(\sigma-\rho)V] =
        \frac{\pi}{2}W_{\tr_-(\sigma\otimes\bigotimes_m\varrho_m)}(\alpha) - \frac{\pi}{2}W_{\tr_- R}(\alpha) \geq
        - \frac{\pi}{2}W_{\tr_- R}(\alpha).
    \end{equation}
    Therefore, $-W_{\tr_- R}(\alpha) \leq (2/\pi) \min_{\sigma\notin\mathcal{S}_{\operatorname{GME}}}\|\sigma-\rho\|_1$.

    Next, let us reformulate $W_{\tr_- R}$ in terms of a transformation of the Wigner function $W_{\rho}(\vec{\alpha})$ of just the state of interest $\rho$.
    We start by writing $W_{\tr_- R}$ as the partial trace over the full Wigner function over the $2M-2$ modes:
    \begin{equation}
    \begin{aligned}
        W_{\tr_{-} R}(\alpha) &= \int_{\mathbb{C}^{2(M-1)}} \dd[4(M-1)]{\vec{\alpha}}
        \; W_{R}(\vec{\alpha})
        \; \delta\pqty{
            \alpha -
            \tfrac{\vec{y}^T\vec{\alpha} + \vec{z}^T\vec{\alpha}^*}{\sqrt{2(M-1)}}
        } \\
        &=
        \int_{\mathbb{C}^{M}} \dd[2M]{\vec{\alpha}}
            \; W_{\rho}(\vec{\alpha})
        \int_{\mathbb{C}^{M-2}} \dd[2M-4]{\vec{\gamma}}
            \; \prod_{m=1}^{M-2} W_{\varrho_m}(\gamma_m)
        \; \delta\pqty{
            \alpha -
            \sqrt{\tfrac{M}{2(M-1)}}
            \tfrac{\vec{y}_+^T\vec{\alpha} + \vec{z}_+^T\vec{\alpha}^*}{\sqrt{M}} -
            \sqrt{\tfrac{M}{2(M-1)}}
            \tfrac{\vec{y}_-^T\vec{\gamma} + \vec{z}_-^T\vec{\gamma}^*}{\sqrt{M}}
        } \\
        &=
        \int_{\mathbb{C}}\dd[2]{\beta}
        \underbrace{
            \int_{\mathbb{C}^{M}} \dd[2M]{\vec{\alpha}}
            \; W_{\rho}(\vec{\alpha})
            \; \delta\pqty{\beta - \tfrac{\vec{y}_+^T\vec{\alpha} + \vec{z}_+^T\vec{\alpha}^*}{\sqrt{M}}}
        }_{W_{\tr_{-}\rho}(\beta)} \\
        &\qquad\qquad{}\times{}\int_{\mathbb{C}^{M-2}} \dd[2(M-2)]\vec{\gamma}
            \; \prod_{m=1}^{M-2} W_{\varrho_m}(\gamma_m)
        \; \delta\pqty{
            \alpha -
            \sqrt{\tfrac{M}{2(M-1)}}\beta -
            \sqrt{\tfrac{M}{2(M-1)}}
            \tfrac{\vec{y}_-^T\vec{\gamma} + \vec{z}_-^T\vec{\gamma}^*}{\sqrt{M}}
        },
    \end{aligned}
    \end{equation}
    where here $\tr_-\rho$ is the reduced state of $\rho$ in the center-of-mass mode $\propto\sum_{m=1}^M y_m a_m + z_ma_m^*$.
    In the second step, we have also split $\vec{y} \in \mathbb{C}^{2(M-1)}$ into $\vec{y}_+ = (y_m)_{m=1}^{M}$ and $\vec{y}_- = (y_m)_{m=M+1}^{M-2}$, with analogous definitions for $\vec{z}$.
    Now, taking a closer look at the last term, we have
    \begin{equation}
    \begin{aligned}
        &\int_{\mathbb{C}^{M-2}} \dd[2(M-2)]\vec{\gamma}
            \; \prod_{m=1}^{M-2} W_{\varrho_m}(\gamma_m)
        \; \delta\pqty{
            \alpha -
            \sqrt{\tfrac{M}{2(M-1)}}\beta -
            \sqrt{\tfrac{M-2}{2(M-1)}}
            \tfrac{\vec{y}_-^T\vec{\gamma} + \vec{z}_-^T\vec{\gamma}^*}{\sqrt{M-2}}
        } \\
        &\qquad{}={} \frac{2(M-1)}{M}\int_{\mathbb{C}^{M-2}} \dd[2(M-2)]\vec{\gamma}
            \; \prod_{m=1}^{M-2} W_{\varrho_m}(\gamma_m)
        \; \delta\pqty{
            \pqty{
                \sqrt{\tfrac{2(M-1)}{M}}\alpha -
                \beta
            } -
            \tfrac{\vec{y}_-^T\vec{\gamma} + \vec{z}_-^T\vec{\gamma}^*}{\sqrt{M}}
        } \\
        &\qquad{}\eqqcolon{}K\pqty{\sqrt{\tfrac{2(M-1)}{M}}\alpha -
                \beta;\mathcal{R}}.
    \end{aligned}
    \end{equation}
    To simplify the proof, let us note that there always exists a unitary $U_m$ local to the $m$th mode such that $W_{U_m\varrho_m'U_m^\dag}(\gamma_m) = W_{\varrho_m'}(y_m\gamma_m + z_m\gamma_m^*)$, coming from the proof of \cref{lem:Wigner-absolute-volume-all}.
    Using these unitaries, we have
    \begin{equation}
    \begin{aligned}
        &K\pqty{\sqrt{\tfrac{2(M-1)}{M}}\alpha -
                \beta;\mathcal{R}}\\
        &= \frac{2(M-1)}{M}\int_{\mathbb{C}^{M-2}} \dd[2(M-2)]\vec{\gamma}
            \; \prod_{m=1}^{M-2} W_{U_m^\dag \varrho_m U_m}(y_m\gamma_m + z_m\gamma_m^*)
            \; \delta\pqty{
                \pqty{
                    \sqrt{\tfrac{2(M-1)}{M}}\alpha -
                    \beta
                } -
            \tfrac{\vec{y}_-^T\vec{\gamma} + \vec{z}_-^T\vec{\gamma}^*}{\sqrt{M}}
            } \\
        &= \frac{2(M-1)}{M}\int_{\mathbb{C}^{M-2}} \dd[2(M-2)]\vec{\gamma}'
            \; \prod_{m=1}^{M-2} W_{U_m^\dag \varrho_m U_m}(y_m')
            \; \delta\pqty{
                \pqty{
                    \sqrt{\tfrac{2(M-1)}{M}}\alpha -
                    \beta
                } -
            \tfrac{\vec{1}^T\vec{\gamma'}}{\sqrt{M}}
            } \\
        &= K\pqty{\sqrt{\tfrac{2(M-1)}{M}}\alpha -
            \beta;\mathcal{R} = \{U_m^\dag\varrho_m U_m\}}.
    \end{aligned}
    \end{equation}
    That is to say, we can always absorb the coefficients $\vec{y}_-$ and $\vec{z}_-$ into a different choice of states $\mathcal{R}$.
    Finally, substituting this back into the definition of the smoothed Wigner function, we have
    \begin{equation}
        \widetilde{W}_{\tr_-\rho}(\alpha;\mathcal{R})
        \coloneqq \int_{\mathbb{C}}\dd[2]{\beta}
        W_{\tr_-\rho}(\beta)
        K(\alpha-\beta;\mathcal{R})
        = W_{\tr_-R}\pqty{\sqrt{\frac{M}{2(M-1)}}\alpha} \geq -\frac{2}{\pi}\min_{\sigma\notin\mathcal{S}_{\operatorname{GME}}}\|\sigma-\rho\|_1,
    \end{equation}
    as desired.
\end{proof}

For easier computation of $K(\alpha;\mathcal{R})$, note also that
\begin{equation}
\begin{aligned}
    K(\alpha;\mathcal{R}) &= \frac{2(M-1)}{M}\int_{\mathbb{C}^{M-2}} \dd[2(M-2)]\vec{\gamma}
        \; \prod_{m=1}^{M-2} W_{\varrho_m}(\gamma_m)
        \; \delta\pqty{
            \alpha -
            \tfrac{\vec{1}^\dagger\vec{\gamma}}{\sqrt{M}}
        } \\
    &= \frac{2(M-1)}{M-2}\int_{\mathbb{C}^{M-2}} \dd[2(M-2)]\vec{\gamma}
        \; \prod_{m=1}^{M-2} W_{\varrho_m}(\gamma_m)
        \; \delta\pqty{
            \sqrt{\tfrac{M}{M-2}}\alpha -
            \tfrac{\vec{1}^\dagger\vec{\gamma}}{\sqrt{M-2}}
        } \\
    &= \frac{2(M-1)}{M-2}W_{\tr_-(\bigotimes_m\varrho_{m})}\pqty{\sqrt{\tfrac{M}{M-2}}\alpha}.
\end{aligned}
\end{equation}
Indeed, by choosing all the auxiliary states $\varrho_m = \varrho_G$ to be the same Gaussian state gives \cite{WalschaersNonGaussian2021}
\begin{equation}
\begin{aligned}
    K(\alpha;\mathcal{R}_G) = \frac{1-M^{-1}}{\pi\sqrt{\det\Sigma}} e^{-\frac{1}{2}\abs{\Sigma^{-\frac{1}{2}}\spmqty{\Re[\alpha-\alpha_0]\\\Im[\alpha-\alpha_0]}}^2}{}\text{ such that }{}\sqrt{\det\Sigma} \geq \frac{M-2}{4M}.
\end{aligned}
\end{equation}

\section{\label{apd:nonclassicality-depth-reduced-state}Proof of Corollary~\ref*{col:nonclassicality-depth-reduced-state}}
\begin{repcorollary}{col:nonclassicality-depth-reduced-state}[Sufficient nonclassicality depth of the center-of-mass implies GME]
    The nonclassicality depth $\tau_c$ of a state $\rho$ is defined as \cite{nonclassicality-depth}
    \begin{equation}
        \tau_c(\rho) \coloneqq \min\!\Bqty{\tau : \forall \alpha : \frac{1}{\pi\tau}\int_{\mathbb{C}}\dd[2]{\beta}
        \;P_\rho(\beta)
        \;e^{-\frac{\abs{\alpha-\beta}^2}{\tau}} \geq 0 },
    \end{equation}
    where $P_\rho(\alpha)$ is the Glauber $P$ function of $\rho$ such that $\rho = \int_{\mathbb{C}}\dd[2]{\alpha} P_{\rho}(\alpha)\ketbra{\alpha}$, and $\ket{\alpha}$ is the coherent state.
    Then, $\tau_c(\tr_{-}\rho) > 1 - M^{-1}$
    implies that $\rho$ is GME.
\end{repcorollary}
\begin{proof}
    Given $\tau_c(\tr_{-}\rho) > 1 - M^{-1}$, this means by the definition of nonclassicality depth that
    \begin{equation}
    \begin{aligned}
        \exists\alpha : 0 &> \frac{1}{\pi(1-M^{-1})}
        \int_{\mathbb{C}}\dd[2]{\beta}
        \;P_{\tr_-\rho}(\beta)
        \;e^{-\frac{\abs{\alpha-\beta}^2}{1-M^{-1}}} \\
        &=
        \int_{\mathbb{C}}\dd[2]{\beta}
        \;P_{\tr_-\rho}(\beta)
        \;\int_{\mathbb{C}}\dd[2]{\gamma}
        \pqty{
            \frac{2}{\pi}
            e^{-2\abs{\gamma-\beta}^2}
        }
        \pqty{
            \frac{1}{\pi(1/2-M^{-1})}
            e^{-\frac{\abs{\gamma-\alpha}^2}{1/2-M^{-1}}}
        },
    \end{aligned}
    \end{equation}
    where we have used the convolution of two Gaussians to split one Gaussian into two.
    Next, we use that the Wigner function itself can be written as a smoothed Glauber $P$ function as \cite{quasiprobability-review}
    \begin{equation}
        W_\rho(\alpha) = \frac{2}{\pi}\int_{\mathbb{C}}\dd[2]{\gamma} P_\rho(\gamma) e^{-2\abs{\alpha-\gamma}^2},
    \end{equation}
    while we can identify the second Gaussian as
    \begin{equation}
        \frac{1}{\pi(1/2-M^{-1})}
        e^{-\frac{\abs{\gamma-\alpha}^2}{1/2-M^{-1}}} =
        \frac{1}{\pi(1/2-M^{-1})}
        e^{
            -\frac{1}{2}\spmqty{\Re[\gamma-\alpha]\\\Im[\gamma-\alpha]}
            \spmqty{
                \frac{4M}{M-2} & 0\\
                0 & \frac{4M}{M-2}
            }
            \spmqty{\Re[\gamma-\alpha]\\\Im[\gamma-\alpha]}
        } =
        \frac{1}{2\pi\sqrt{\det\Sigma}} e^{
            -\frac{1}{2}
            \abs{
                \Sigma^{-\frac{1}{2}}\spmqty{\Re[\gamma-\alpha]\\\Im[\gamma-\alpha]}
            }^2
        }
    \end{equation}
    with $\Sigma \coloneqq (M-2)(4M)^{-1}\mathbb{1}$ such that $\sqrt{\det\Sigma} = (M-2)(4M)^{-1}$.
    Therefore,
    \begin{equation}
    \begin{aligned}
        \tau_c(\tr_{-}\rho) > 1 - M^{-1} \implies
        \exists\alpha : 0 &>
        \int_{\mathbb{C}}\dd[2]{\gamma}
        \pqty{
            \frac{2}{\pi}
            \int_{\mathbb{C}}\dd[2]{\beta}
            P_{\tr_-\rho}(\beta)
            e^{-2\abs{\gamma-\beta}^2}
        }
        \pqty{
            \frac{1}{\pi(1/2-M^{-1})}
            e^{-\frac{\abs{\gamma-\alpha}^2}{1/2-M^{-1}}}
        } \\
        &=\int_{\mathbb{C}}\dd[2]{\gamma}
        W_{\tr_-\rho}(\gamma)\;
        \frac{1}{2\pi\sqrt{\det\Sigma}} e^{
            -\frac{1}{2}
            \abs{
                \Sigma^{-\frac{1}{2}}\spmqty{\Re[\gamma-\alpha]\\\Im[\gamma-\alpha]}
            }^2
        } \\
        &= \frac{1}{2(1-M^{-1})}
        \int_{\mathbb{C}}\dd[2]{\gamma}
        W_{\tr_-\rho}(\gamma) \;
        K(\alpha-\gamma;\mathcal{R}_G) \\
        &= \frac{1}{2(1-M^{-1})}\widetilde{W}_{\tr_-\rho}(\alpha).
    \end{aligned}
    \end{equation}
    Therefore, $\rho$ is GME by Theorem~2.
\end{proof}

\section{\label{apd:GME-Wigner-witness}Proof of Corollary~\ref*{col:GME-Wigner-witness}}
\begin{repcorollary}{col:GME-Wigner-witness}[GME criterion with Wigner function measurements over a finite region]
    Let the absolute volume of the Wigner function on $\{\alpha\vec{y} + \alpha^*\vec{z} : \alpha \in \omega\}$, where $\vec{y}\circ\vec{y}^*-\vec{z}\circ\vec{z}^* = \vec{1}$ and $\omega \subseteq \mathbb{C}$ is Lebesgue-measurable, be
    \begin{equation}
    \begin{aligned}
        \mathcal{V}_{2D}(\rho;\omega) &\coloneqq \pqty{\frac{\pi}{2}}^{M-1}\int_{\omega}\dd[2]{\alpha} \abs\big{W_{\rho}\pqty{\alpha\vec{y} + \alpha^*\vec{z}}}.
    \end{aligned}
    \end{equation}
    Then, $\mathcal{V}_{2D}(\rho;\omega) > (2\sqrt{M-1})^{-1}$ implies that $\rho$ is GME.
\end{repcorollary}
\begin{proof}
    Given that $\mathcal{V}_{2D}(\rho;\omega) > (2\sqrt{M-1})^{-1}$, we have
    \begin{equation}
        \frac{1}{2\sqrt{M-1}} < \pqty{\frac{\pi}{2}}^{M-1}\int_{\omega}\dd[2]{\alpha} \abs\big{W_{\rho}\pqty{\alpha\vec{y} + \alpha^*\vec{z}}}
        \leq \pqty{\frac{\pi}{2}}^{M-1}\int_{\mathbb{C}}\dd[2]{\alpha} \abs\big{W_{\rho}\pqty{\alpha\vec{y} + \alpha^*\vec{z}}},
    \end{equation}
    simply because $\omega\subseteq\mathbb{C}$ and the integral of a positive function is nondecreasing with the size of the integration region.
    From this, we have
    \begin{equation}
    \begin{aligned}
        \frac{1}{2\sqrt{M-1}}
            &< \pqty{\frac{\pi}{2}}^{M-1}\int_{\mathbb{C}}\dd[2]{\alpha} \abs\big{W_{\rho}\pqty{\alpha\vec{y} + \alpha^*\vec{z}}} \\
        \frac{1}{2\sqrt{M-1}} - \pqty{\frac{\pi}{2}}^{M-1}\int_{\mathbb{C}}\dd[2]{\alpha} W_{\rho}\pqty{\alpha\vec{y} + \alpha^*\vec{z}}
            &< \pqty{\frac{\pi}{2}}^{M-1}\int_{\mathbb{C}}\dd[2]{\alpha} \abs\big{W_{\rho}\pqty{\alpha\vec{y} + \alpha^*\vec{z}}} - \pqty{\frac{\pi}{2}}^{M-1}\int_{\mathbb{C}}\dd[2]{\alpha} W_{\rho}\pqty{\alpha\vec{y} + \alpha^*\vec{z}}\\
        \mathcal{N}_{2D}^{\operatorname{GME}}(\rho) &< \mathcal{N}_{2D}(\rho),
    \end{aligned}
    \end{equation}
    and therefore $\rho \in \mathcal{S}_{\operatorname{GME}}$.
\end{proof}

\section{\label{apd:GME-characteristic-witness}Proof of Corollary~\ref*{col:GME-characteristic-witness}}
\begin{repcorollary}{col:GME-characteristic-witness}[GME criterion with characteristic function measurements over finite points]
    Choose $N$ phase-space points $\Xi = \{\xi_n\}_{n=1}^N$ and coefficients $\vec{y},\vec{z}\in\mathbb{C}^M : \vec{y}\circ\vec{y}^* - \vec{z}\circ\vec{z}^* = \vec{1}$.
    Construct the matrix $\mathbf{C}(\rho;\Xi) \in \mathbb{C}^{N \times N}$ as
    \begin{equation}
        [\mathbf{C}(\rho;\Xi)]_{n,n'}
        \coloneqq \frac{1}{N}\chi_{\rho}\pqty\big{
            (\xi_n-\xi_{n'})\vec{y} +
            (\xi_n^*-\xi_{n'}^*)\vec{z}
        }
        = \frac{1}{N}\chi_{\tr_{-}\rho}\pqty{
            \sqrt{M}(\xi_n-\xi_{n'})
        }.
    \end{equation}
    Next, choose $M-2$ states $\mathcal{R} = \{\varrho_m\}_{m=1}^{M-2}$ and construct $\mathbf{K}(\mathcal{R};\Xi)\in\mathbb{C}^{N\times N}$ by computing
    $[\mathbf{K}(\mathcal{R};\Xi)]_{n,n'} \coloneqq \prod_{m=1}^{M-2} \chi_{\varrho_m}(\xi_n-\xi_{n'})$.
    Then, the magnitude of the most negative eigenvalue of their element-wise product
    \begin{equation}
        \mathcal{N}_C(\rho;\Xi,\mathcal{R}) \coloneqq \max\Bqty{0,-\operatorname{mineig}\bqty{
        \mathbf{C}(\rho;\Xi)\circ\mathbf{K}(\mathcal{R};\Xi)
        }}
    \end{equation}
    lower bounds the trace distance to all non-GME states as
    \begin{equation}
        \mathcal{N}_C(\rho;\Xi,\mathcal{R}) \leq \min_{\sigma \notin\mathrm{GME}} \| \sigma - \rho \|_1.
    \end{equation}
    Hence, $\mathbf{C}(\rho;\Xi)\circ\mathbf{K}(\mathcal{R};\Xi) \not\succeq 0$ implies that $\rho$ is GME.
\end{repcorollary}
\begin{proof}
    Bochner's theorem states that \cite{Bochner-theorem-1,Bochner-theorem-2} the positivity of the Wigner function of a state $W_{\rho}(\vec{\alpha}) \geq 0$ implies that the Bochner matrix constructed out of its characteristic function $\mathbf{C}:[\mathbf{C}]_{n,n'} = \chi_{\rho}(\vec{\xi}_n-\vec{\xi}_{n'})/N$ will be positive semidefinite $\mathbf{C}\succeq 0$.
    Together with Theorem~2, this gives
    \begin{equation}
        \rho\notin\mathcal{S}_{\operatorname{GME}}
        \implies R=\rho \otimes \bigotimes_{m=1}^{M-2}\varrho_m: W_{\tr_-R}(\alpha) \geq 0
        \implies [\overline{\mathbf{C}}]_{n,n'} = \frac{1}{N}\chi_{\tr_-R}(\sqrt{2(M-1)}\xi_n-\sqrt{2(M-1)}\xi_{n'})
        : \overline{\mathbf{C}} \succeq 0.
    \end{equation}
    For later convenience, let us take $\sqrt{2(M-1)} a_+ = \sum_{m=1}^{2M-2} (y_m^* a_m - z_m a_m^\dag)$ to be the center-of-mass of the extended system with $\forall m > M : y_m =1, z_m = 0$.
    Now, noting that the characteristic function can be written as the expectation value $\chi_{\rho}(\vec{\xi}) = \tr(\rho e^{\sum_m(\xi_ma_m^\dag - \xi_m^* a_m)})$, we have
    \begin{equation}
    \begin{aligned}
        \chi_{\tr_-R}(\sqrt{2(M-1)}\xi) &= \tr(R \; e^{\sqrt{2(M-1)}\xi a_+^\dag - \sqrt{2(M-1)}\xi^* a_+}
        ) \\
        &= \tr(\rho e^{\sum_{m=1}^{2M-2}[
            \xi (y_m a_m^\dag - z_m^* a_m) -
            \xi^* (y_m^* a_m - z_m a_m^\dag)
        ]}) \prod_{m'=1}^{M-2} \tr(\varrho_{m'} e^{
            \xi a_{m'+M}^\dag -
            \xi^* a_{m'+M}
        }) \\
        &= \tr(\rho e^{\sum_{m=1}^{2M-2}[
            (\xi y_m + \xi^* z_m) a_m^\dag -
            (\xi y_m + \xi^* z_m)^* a_m
        ]}) \prod_{m'=1}^{M-2} \chi_{\varrho_{m'}}(\xi) \\
        &= \chi_{\rho}(\xi\vec{y} + \xi^*\vec{z})\prod_{m=1}^{M-2} \chi_{\varrho_m}(\xi),
    \end{aligned}
    \end{equation}
    where $\vec{y} = (y_m)_{m=1}^M$ and $\vec{z} = (z_m)_{m=1}^M$.
    Then, we immediately have that for the matrices $\mathbf{C}(\rho;\Xi)$ and $\mathbf{K}(\mathcal{R};\Xi)$ as defined in the corollary,
    \begin{equation}
        \rho \notin \mathcal{S}_{\operatorname{GME}} \implies \mathbf{C}(\rho;\Xi)\circ\mathbf{K}(\mathcal{R};\Xi) = \overline{\mathbf{C}} \succeq 0.
    \end{equation}
    The last step is to notice that for a given $\mathbf{C}(\rho;\Xi)\circ\mathbf{K}(\mathcal{R};\Xi)$, its minimum eigenvalue is its inner product with the corresponding normalized eigenvector $\vec{v} : \vec{v}^\dag\vec{v} = 1$ and $[\mathbf{C}\circ\mathbf{K}]\vec{v} = \operatorname{mineig}[\mathbf{C}\circ\mathbf{K}]\vec{v}$ as
    \begin{equation}
        \operatorname{mineig}\bqty{
            \mathbf{C}(\rho;\Xi)\circ\mathbf{K}(\mathcal{R};\Xi)
        } = \vec{v}^\dagger\bqty{\mathbf{C}(\rho;\Xi)\circ\mathbf{K}(\mathcal{R};\Xi)}\vec{v}
        = \tr\bqty\Bigg{\rho \; \underbrace{\pqty{
                \frac{1}{N}\sum_{n,n'}v_n^* v_{n'}\chi_{\varrho_m}(\xi_n-\xi_{n'})
                \overline{D}(\xi_n-\xi_{n'})
            }}_{\coloneqq V}
        },
    \end{equation}
    where we defined $\overline{D}(\xi) \coloneqq e^{\sum_{m=1}^{2M-2}[
            (\xi y_m + \xi^* z_m) a_m^\dag -
            (\xi y_m + \xi^* z_m)^* a_m
    ]}$ for brevity.
    Here, $V$ can be verified to be Hermitian using $\chi_{\rho}^*(\xi) = \chi_\rho(-\xi)$, and by the inequality of matrix norms,
    \begin{equation}
        \max_{\vec{v}}\abs{
            \vec{v}^\dag[\mathbf{C}\circ\mathbf{K}]\vec{v}
        } \leq \max_{n'}\sum_{n=1}^N\abs{[\mathbf{C}\circ\mathbf{K}]_{n,n'}} = \frac{1}{N}\sum_{n=1}^N\abs{\chi_{\tr_-R}(\xi_n-\xi_{n'})} \leq 1,
    \end{equation}
    where we used that the characteristic function is always bounded by one.
    This means that $\forall \rho':\abs{\tr[\rho' V]} \leq 1 \implies -\mathbbm{1} \preceq V \preceq \mathbbm{1}$.
    Therefore, by the variational definition of the trace norm,
    \begin{equation}
    \begin{aligned}
        \sigma\notin\mathcal{S}_{\operatorname{GME}} : \|\sigma-\rho\|_1 &= \sup_{-\mathbbm{1} \preceq V' \preceq \mathbbm{1}}\tr[(\sigma-\rho)V'] \geq
        \tr[(\sigma-\rho)V] \\
        &= \vec{v}^\dagger\bqty{\mathbf{C}(\sigma;\Xi)\circ\mathbf{K}(\mathcal{R};\Xi)}\vec{v} -
        \vec{v}^\dagger\bqty{\mathbf{C}(\rho;\Xi)\circ\mathbf{K}(\mathcal{R};\Xi)}\vec{v} \\
        &\geq - \operatorname{mineig}\bqty{\mathbf{C}(\rho;\Xi)\circ\mathbf{K}(\mathcal{R};\Xi)},
    \end{aligned}
    \end{equation}
    where we also used that $\sigma\notin\mathcal{S}_{\operatorname{GME}} \implies \mathbf{C}(\sigma)\circ\mathbf{K} \succeq 0 \implies \forall \vec{v} : \vec{v}^\dag [\mathbf{C}(\sigma)\circ\mathbf{K}] \vec{v} \geq 0$, which completes the proof.
\end{proof}

\end{document}